\theoremstyle{plain}
\newtheorem{Lemma}{Lemma}
\newtheorem{Theorem}{Theorem}
\newcommand{\1}{\mathbb{1}}
\newcommand{\W}{\mathcal{W}}
\newcommand{\Wo}[1]{\W^{(#1)}_{\mathrm{opt}}}
\newcommand{\Wt}{\widetilde{\W}}
\newcommand{\p}{p_{\textrm{limit}}}
\newcommand{\TrM}[1]{\text{Tr}\left(#1\right)}
\newcommand{\Vs}{V_{\mathbb{S}}}
\newcommand{\ket}[1]{|#1\rangle}
\newcommand{\bra}[1]{\langle #1|}
\newcommand{\Tr}{\mathrm{Tr}}
\newcommand{\proj}[1]{|#1\rangle\!\langle#1|}
\newcommand{\scalprod}[2]{\langle #1|#2\rangle}
\providecommand{\DIFdelbegin}{}
\providecommand{\DIFdelend}{}
\providecommand{\DIFaddbegin}{}
\providecommand{\DIFaddend}{}
\begin{abstract}
Genuine multipartite entanglement is arguably the most valuable form of entanglement in the multipartite case with application for instance in quantum metrology. In order to detect that form of entanglement in multipartite quantum states, one typically uses entanglement witnesses. The aim of this paper is to generalize the results of [G. T\'oth and O. G\"uhne, 
Phys. Rev. A \textbf{72}, 022340 (2005)] in order to provide a construction of witnesses of genuine multipartite entanglement tailored to entangled subspaces originating from the \textit{multi-qudit} stabilizer formalism---a framework well known for its role in quantum error correction, which also provides a very convenient description of a broad class of entangled multipartite states (both pure and mixed). Our construction includes graph states of arbitrary local dimension. We then show that in certain situations, the obtained witnesses detecting genuine multipartite entanglement in quantum systems of higher local dimension are superior in terms of noise robustness to those derived for multiqubit states.

\end{abstract}
\begin{document}
\title{Entanglement witnesses for stabilizer states and subspaces beyond qubits}
\author{Jakub Szczepaniak}
\affiliation{Center for Theoretical Physics, Polish Academy of Sciences, Aleja Lotnik\'{o}w 32/46, 02-668 Warsaw, Poland}
\author{Owidiusz Makuta}
\affiliation{Center for Theoretical Physics, Polish Academy of Sciences, Aleja Lotnik\'{o}w 32/46, 02-668 Warsaw, Poland}

\affiliation{Instituut-Lorentz, Universiteit Leiden, P.O. Box 9506, 2300 RA Leiden, The Netherlands}
\affiliation{$\langle \text{aQa}^\text{L} \rangle$ Applied Quantum Algorithms Leiden, The Netherlands}

\author{Remigiusz Augusiak}
\email{Corresponding author (augusiak@cft.edu.pl)}
\affiliation{Center for Theoretical Physics, Polish Academy of Sciences, Aleja Lotnik\'{o}w 32/46, 02-668 Warsaw, Poland}

\maketitle

\section{Introduction}

Entanglement is considered to be one of the most characteristic features of quantum theory that has become a resource for interesting applications such as quantum key distribution \cite{PhysRevLett.67.661} or quantum teleportation \cite{PhysRevLett.70.1895} (see also Ref. \cite{Horodecki_2009} for a review). It is also the key phenomenon giving rise to Bell nonlocality \cite{Bell}, which is another form of nonclassical correlations that quantum theory exhibits. In the multipartite setting, the most valuable form of entanglement is considered to be the so-called genuine multipartite entanglement (GME), which, among others, is vital for the possibility of achieving the Heisenberg limit in quantum metrology \cite{MultipMetrology,MultipMetrology2}.    

However, the full exploitation of entanglement as a resource requires efficient tools that allow its detection in composite quantum states. Arguably, one of the most powerful and at the same time most complete methods is that based on entanglement witnesses (EWs), first introduced for bipartite quantum states in the seminal paper \cite{HORODECKI19961}, and later extended to the multipartite case in \cite{HORODECKI20011}; notice that the term itself was coined only later in Ref. \cite{TERHAL2000319}. Entanglement witnesses are Hermitian observables whose expectation values can be used to detect entanglement in composite quantum states. Specifically, their expectation values are nonnegative for all separable states \cite{PhysRevA.40.4277}, while certain entangled states yield negative expectation values, signaling the presence of entanglement. At the same time, for any entangled state, there exists an entanglement witness which detects entanglement of that state \cite{HORODECKI19961,HORODECKI20011}. Importantly, entanglement witnesses can also be used to detect genuine multipartite entanglement (see, e.g., Refs. \cite{PhysRevLett.94.060501,OtfriedStabilizer}). 

For these reasons, entanglement witnesses have been intensively studied in the literature \cite{PhysRevA.62.052310,PhysRevA.84.052323,doi:10.1142/S0219749915500604} and many constructions of them have been devised to detect entanglement in various bipartite and multipartite quantum states, just to mention Refs. \cite{PhysRevLett.94.060501,OtfriedStabilizer,PhysRevA.80.062314,PhysRevLett.111.110503,PhysRevLett.123.100507,PhysRevA.97.032318} (cf. also the reviews \cite{GUHNE20091,Chruściński_2014}). However, the majority of these efforts have focused on the simplest case of multipartite systems composed of qubits, whereas systems with higher local dimensions have received a lot less attention. At the same time, significant progress has recently been made in the preparation and manipulation of various multiqudit quantum systems (see e.g. Refs. \cite{Wang,HighDim,Martin}), providing strong motivation to develop suitable entanglement witnesses capable of detecting genuine multipartite entanglement in such systems. Despite the fact that a number of entanglement criteria have been designed up to date, the problem is far from being completely solved. 

The main aim of this work is to introduce constructions of witnesses of GME for quantum states originating from the stabilizer formalism 
\cite{Steane,PhysRevA.54.3824,gottesman1997stabilizercodesquantumerror,PhysRevLett.77.198, gotteseman_1996} in Hilbert spaces composed of many qudits. While this framework is most known for its utility in designing quantum error correction codes such as the five-qubit \cite{gottesman1997stabilizercodesquantumerror,PhysRevA.54.3824,PhysRevLett.77.198} and Kitaev toric \cite{KITAEV20032} codes, it also provides a very efficient and convenient-to-handle representation of a broad class of multipartite states (both pure and mixed) composed of many qudits. In fact, the stabilizer formalism encompasses the well-known graph states  \cite{Hein1,Hein2} corresponding to one-dimensional stabilizer subspaces, but also classes of genuinely entangled mixed states defined on certain multidimensional stabilizer subspaces \cite{Makuta2023fullynonpositive}. 

In this work, we build upon the results of Refs. \cite{OtfriedStabilizer,PhysRevLett.94.060501}, which introduced entanglement witnesses for multiqubit graph states. Specifically, we generalize these results to higher-dimensional subspaces arising from the multiqudit stabilizer formalism. As a particular case of our construction, we also recover entanglement witnesses tailored to graph states of arbitrary prime local dimension. Robustness to white noise of the obtained EW's is then studied. Interestingly, we observe that the robustness of a given witness tailored to a particular graph state grows with the local dimension $d$, and also that, entanglement witnesses for genuinely entangled subspaces are in general superior in terms of robustness to white noise over witnesses tailored to states contained in those subspaces, which is a consequence of the fact that EW's for subspaces involve less stabilizing operators as compared to states. We finally provide a construction of a witness detecting GME tailored to a three-qubit two-dimensional subspaces, which does not originate from the stabilizer formalism, for which the stabilizing operators are non-local.

\section{Preliminaries}

In this section, we provide basic information about stabilizer formalism and genuine multipartite entanglement.

\subsection{Genuine multipartite entanglement}
Let us consider an $N$-partite Hilbert space $\mathcal{H}^{\textrm{tot}} = \mathcal{H}_1\otimes...\otimes\mathcal{H}_N$, where we assume that  $\mathcal{H}_i=\mathbb{C}^d$ for any $i$. We call a bipartition $Q|\overline{Q}$ a partition of the set of parties $[N]=\{1,\ldots,N\}$ into two, disjoint, and nonempty sets such that $Q\cup\overline{Q}=[N]$. Naturally, a bipartition $Q|\overline{Q}$ translates into a splitting of the total Hilbert space $\mathcal{H}^{\textrm{tot}}$ into Hilbert spaces corresponding to both sets, that is, $\mathcal{H}^{tot}=\mathcal{H}_{Q}\otimes\mathcal{H}_{\overline{Q}}$. 

The notion of a bipartition is central to the definition of \textit{genuine multipartite entanglement}
\cite{PhysRevLett.83.3562,Dur2_2000,Dur_2000,Acin_2001} (see also Ref. \cite{GUHNE20091} for a review): we say that a mixed state $\rho$ is biseparable if it admits the following decomposition.
\begin{equation}\label{eq: biseparable}
    \rho = \sum_{Q\subset[N]}p_Q\sum_iq_{i,Q}\rho_Q^i\otimes\rho^i_{\overline{Q}},
\end{equation}
for some $\rho_{Q}^{i}\in\mathcal{B}(\mathcal{H}_{Q})$, $\rho_{\overline{Q}}^{i}\in\mathcal{B}(\mathcal{H}_{{Q}})$, $\{q_{i,Q}\}_{i,Q}$ and $\{p_{Q}\}_{Q}$, where $\{q_{i,Q}\}_{i,Q}$ and $\{p_{Q}\}_{Q}$ are probability distributions. We then call a state $\rho$ genuinely multipartite entangled if it is not biseparable, that is, it does not admit the decomposition \eqref{eq: biseparable}.

Note that in the case of pure states, the above definition simplifies to the following: a state $\ket{\psi}$ is GME if for any bipartition $Q|\overline{Q}$ there are no 
pure states defined on $Q$ and $\overline{Q}$ such that
\begin{equation}
    \ket{\psi} = \ket{\psi_Q}\otimes\ket{\psi_{\overline{Q}}}.
\end{equation}
In other words, a pure GME state is one that cannot be written as a tensor product of any two other pure states. Lastly, the definition of pure GME states can directly be extended to entire subspaces of $\mathcal{H}^{\mathrm{tot}}$: we say that a subspace $V$ is GME if every state $\ket{\psi}\in V$ is GME. 

An example of a pure GME state is the well-known $N$-qubit $W$ state
\begin{equation}
    |W\rangle=\frac{1}{\sqrt{N}}(|10\ldots0\rangle+|01\ldots0\rangle+\ldots+|00\ldots1\rangle),
\end{equation}
which is a symmetric combination of all kets in which one qubit is 
in the excited state whereas the remaining $N-1$ qubits are in the ground state. Then, an example of a GME subspace would be one spanned by the $W$ state and the "negated" $W$ state 
\begin{equation}
    |\overline{W}\rangle=\frac{1}{\sqrt{N}}(|01\ldots1\rangle+|10\ldots1\rangle+\ldots+|11\ldots0\rangle),
\end{equation}
which is obtained from $W$ by flipping all bits $0\leftrightarrow 1$.

\subsection{Qudit stabilizer formalism}

In this work, we focus on quantum states---both pure and mixed---and subspaces that arise from the stabilizer formalism. Accordingly, we begin by introducing the fundamental concepts of this formalism that will be used throughout the paper. Let us consider a qudit Hilbert space $\mathbb{C}^{d}$, with the requirement that $d$ is prime.
Then, let us consider the Weyl-Heisenberg matrices $\omega^iX^jZ^k$, where $i,j,k\in\{0,\ldots,d-1\}$ and $X,Z\in\mathcal{B}(\mathbb{C}^d)$ are $d$-dimensional generalizations of the Pauli matrices
and are given by
\begin{equation}
\DIFdelbegin 
\DIFdelend \DIFaddbegin \begin{aligned}
    &X = \sum_{j=0}^{d-1}\ket{j+1}\bra{j},\qquad Z = \sum_{j=0}^{d-1}\omega^j\ket{j}\bra{j},
\end{aligned}\DIFaddend 
\end{equation}
where the addition is done modulo $d$, and $\omega = \operatorname{exp}(2\pi \mathbb{i}/d)$. Importantly, they satisfy the following conditions
\begin{equation}
\begin{aligned}
    &X^d = \1, \quad Z^d = \1,\\
    &XX^\dagger = \1, \quad ZZ^\dagger = \1,
\end{aligned}
\end{equation}
i.e., they are unitary and their eigenvalues are $d$th roots of the unity. Moreover, they also satisfy the following commutation relation,
\begin{equation}
    ZX = \omega XZ.
\end{equation}

Now, the set of all $N$-fold tensor products of the Weyl-Heisenberg matrices endowed with the matrix multiplication forms the so-called generalized Pauli group $\mathbb{P}_{N,d}$. An abelian subgroup of $\mathbb{P}_{N,d}$ is called a stabilizer $\mathbb{S}\subset \mathbb{P}_{N,d}$ if $a\1\in\mathbb{S}$ implies that $a = 1$. For each such $\mathbb{S}$, one can identify a subspace $V_{\mathbb{S}}\subset\mathcal{H}^{\mathrm{tot}}$ which is stabilized by all elements of $\mathbb{S}$ in the following sense. 

\begin{equation}
\forall_{S\in \mathbb{S}} \quad S\ket{\psi}=\ket{\psi} \quad \Leftrightarrow \quad \ket{\psi} \in V_{\mathbb{S}}.
\end{equation}

As with other groups, it is convenient to represent a stabilizer $\mathbb{S}$ with its generators $\mathbb{S} = \langle G_1,\ldots,G_k\rangle $, i.e., a minimal set of elements such that any other element of a group can be represented by their product. 

One of the use-cases for the generators is the computation of the dimension of $V_{\mathbb{S}}$. Given that $d$ is prime and the number of generators equals $k$, we have that $\dim \Vs = d^{N-k}$ \cite{GHEORGHIU2014505}.

\subsection{Graph states}
An important subclass of stabilizer subspaces is the one-dimensional ones, known as stabilizer states. Among them, graph states form a notable subset, characterized by additional constraints derived from a graph $\mathcal{G}$. To formally define graph states, let us first introduce the concept of graphs.

A graph $\mathcal{G} = (V,E)$ is an ordered pair, where $V$ is a set of vertices, and $E$ is a set of edges, i.e., two-element, unordered sets of vertices.  If we allow multiple copies of the same edge in $E$ then such a graph is called a multigraph. While in this paper we will mostly focus on multigraphs, for simplicity, we will refer to them as graphs.

With each graph, we can associate an adjacency matrix $\Gamma$ whose entry $\Gamma_{i,j}$ corresponds to the number of edges connecting vertices $i$ and $j$. On Fig. \ref{fig:graphs}, we provided some examples of graphs.

\begin{figure}[H]
    \centering
    \subfigure[]{\includegraphics[width=0.3\linewidth]{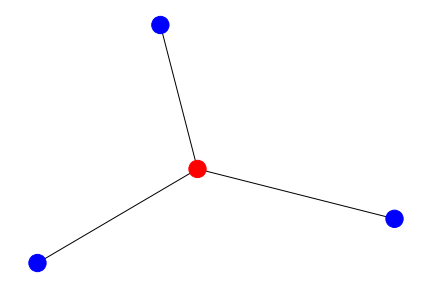}}
    \hfill
    \subfigure[]{\includegraphics[width=0.3\linewidth]{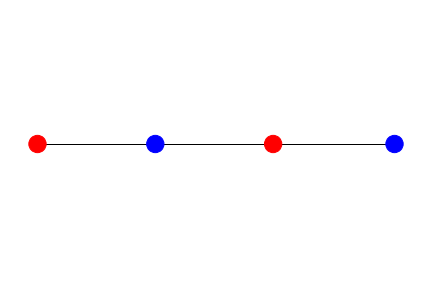}}
    \hfill
    \subfigure[]{ \includegraphics[width=0.3\linewidth]{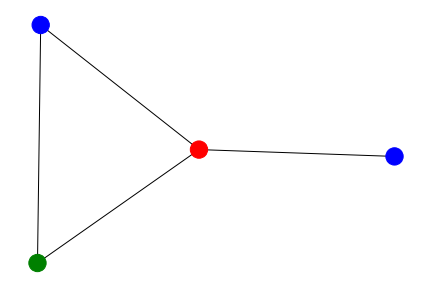}}
    \caption{\textbf{Exemplary graphs leading to four-partite graph states.} Graph (a) represents the GHZ state, (b) the cluster state, and (c) is an example of a graph with chromatic number $K=3$.}
    \label{fig:graphs}
\end{figure}

Let us finally introduce the concept of graph colorability and the related notion of chromatic number. A $k$-coloring of a graph $\mathcal{G}$ is a division of the vertices $V$ into $k$ disjointed subsets $\{C_{i}\}_{i=1}^{k}$. A proper coloring of $\mathcal{G}$ is a coloring for which $v_{1},v_{2}\in C_{i}$ only if $\{v_{1},v_{2}\}\notin E$. We say that the graph $\mathcal{G}$ is $k$-colorable if there exists a proper $k$-coloring of $\mathcal{G}$. Then, the chromatic number $K$ of a graph $\mathcal{G}$ is the smallest $k$ for which we can find a proper $k$-coloring of $\mathcal{G}$. 

Finally, let us define the graph states. Consider a graph $\mathcal{G} = (V,E)$ with an adjacency matrix $\Gamma$. A graph state $\ket{\mathcal{G}}$ is a unique state stabilized by the following generators
\begin{equation}\label{eq: generators}
    G_i = X_{i}\prod_{j=1}^{N}Z^{\Gamma_{i,j}}_{j},
\end{equation}
where $X_{j},Z_{j}$ are operators that act on the qudit $j$ with operators $X$ and $Z$ respectively, while acting with $\mathbb{1}$ on every other qudit. 

It is worth noting that any graph state can be expressed 
in the computational basis as \cite{Raissi_2024}:
\begin{equation}\label{eq: graph_state}
    \DIFdelbegin 
\DIFdelend \DIFaddbegin \ket{\mathcal G} \DIFaddend = \frac{1}{\sqrt{d^{N}}}\sum_{\vec{j}}\omega^{\tau(\vec{j})}\ket{\vec{j}},
\end{equation}
where $\vec{j}\in (\mathbb{Z}_{d})^{\otimes N}$ with $\mathbb{Z}_d=\{0,\ldots,d-1\}$, the sum is over all possible strings in $(\mathbb{Z}_{d})^{\otimes N}$, and the function $\tau(\cdot): (\mathbb{Z}_{d})^{\otimes N}\rightarrow \mathbb{Z}_{d}$ is given by
\begin{equation}\label{eq: tau}
\tau(\vec{j}) = \sum_{r,s=1}^{N} \mu_{r,s} j_{r}j_{s},
\end{equation}
where $\mu_{r,s}$ equals the number of times the edge $\{r,s\}$ appears in $E$.

In this paper, we also consider particular stabilizer states: Greenberger---Horne---Zeilinger (GHZ) state \cite{Greenberger1989} and the linear cluster state \cite{Briegel_2001,PhysRevLett.86.5188}. The generators of these states are given, respectively, by
\begin{equation}\label{eq: generators ghz}
\begin{aligned}
    &G_1^{GHZ} = \prod_{i=1}^{N} X_{i},\\
    &G_k^{GHZ} = Z_{k-1} Z^{\dagger}_{k}, \quad k=2,3,\ldots,N,
\end{aligned}
\end{equation}
and
\begin{equation}\label{eq: generators cluster}
\begin{aligned}
    &G_1^{C} = X_{1}Z_{2},\\
    &G_k^{C} = Z_{k-1}X_{k}Z_{k+1}, \quad k=2,3,\ldots,N-1,\\
    &G_N^{C} = Z_{N-1}X_{N}.
\end{aligned}
\end{equation}

Let us finally notice that the graph states are all genuinely multipartite entangled \cite{Hein1}

\subsection{Entanglement witnesses}

In order to detect genuine multipartite entanglement of a given state, one can employ a plethora of methods. Among them, one of the most commonly utilized and, at the same time, the most powerful, is that based on the concept of \textit{entanglement witness}. A witness $\W\in\mathcal{B}(\mathcal{H}^{\textrm{tot}})$ designed to detect GME is a Hermitian operator acting on $\mathcal{H}^{\mathrm{tot}}$ whose expectation value
on biseparable states is non-negative, 
\begin{equation}\label{EWcondition}
    \forall_{\rho_{\mathrm{bisep}}}\,\,\Tr(\mathcal{W}\rho_{\mathrm{bisep}})\geq 0,
\end{equation}
and there exists a GME state $\sigma$ for which
$\Tr(\mathcal{W}\sigma)<0$. In other words, $\TrM{\W\rho}<0$ constitutes a sufficient condition for genuine multipartite entanglement of $\rho$.

%
%

\subsection{Measures for entanglement witnesses}

To assess the viability of a given entanglement witness for experimental implementations, one needs to consider its robustness to noise. To this end, we consider a state $\rho(p_{noise})$ defined as
\begin{equation}
    \rho(p_{\mathrm{noise}}) = p_{\mathrm{noise}}\frac{\1}{d^{N}} + (1 - p_{\mathrm{noise}})\ket{\psi}\bra{\psi},
\end{equation}
which is a probabilistic mixture of a target state $\ket{\psi}$ with white noise. Then, the noise robustness of a witness can be assessed by analyzing the amount of white noise we can add to a target state that still leads to a detection of entanglement. To this end, let us recall a useful notion of threshold probability. It is an upper-bound on the probability $p$ for which the witness $\mathcal{W}$ detects the entanglement of $\rho(p)$, i.e.,
\begin{equation}
    \operatorname{Tr}[W\rho(p_{\mathrm{limit}})]=0.
\end{equation}
From simple algebra, we obtain
\begin{equation}\label{eq: threshold probability}
    p_{\mathrm{limit}} = \frac{-\bra{\psi}\W\ket{\psi}}{d^{-N}\Tr(\W) - \bra{\psi}\W\ket{\psi}}.
\end{equation}

Another metric that one can use to assess the experimental viability of a given witness is the complexity of its experimental realization. For this task, usually the concept of local measurement settings (LMSs) is used, but first, we need to introduce a notion of local commutation \cite{OtfriedStabilizer}. Consider the following operators
\begin{equation}
    K = K^{(1)}\otimes \ldots \otimes K^{(N)}, \quad L = L^{(1)}\otimes \ldots \otimes L^{(N)}.
\end{equation}
We say that $K$ and $L$ commute locally if $[K^{(n)},L^{(n)}] = 0$ for every $n = \{1, \ldots, N\}$. Returning to LMSs, we simply say that 2 operators are measurable in the same LMS if they commute locally. Understandably, the metric in question is the number of LMSs needed to measure a given witness $\mathcal{W}$. This notion captures the intuition that in an experimental setup, where we can perform measurements on each party separately, on the same site, we can measure only operators that commute, so the operators for a whole system have to commute locally. 

As was observed in Ref. \cite{OtfriedStabilizer}, the number of LMSs for entanglement witness tailored to a graph state $\ket{\mathcal{G}}$ has a very elegant interpretation: it corresponds to the chromatic number $K$ of the graph $\mathcal{G}$.

The idea of LMSs is central to the construction of entanglement witnesses in Ref. \cite{OtfriedStabilizer}. For example, the entanglement witness tailored to the $N$-qubit GHZ state is given by
\begin{equation}
\mathcal{W}^{(GHZ)} \coloneqq 3\mathbb{1} - 2\left(P_{1}^{GHZ} + \prod_{i=2}^{N}P_{i}^{GHZ} \right),
\end{equation}
where $P_{i}=(\mathbb{1}+G_{i}^{GHZ})/2$ is the projector onto the eigenspace of $G_{i}^{GHZ}$ associated to the eigenvalue $+1$. Notice that this witness can be measured using only two LMSs: one for $P_{1}^{GHZ}$ and another for $\prod_{i=2}^{N}P_{i}^{GHZ}$. In this way, one maximizes the number of GME states detected by this witness while minimizing the number of LMSs. 

\section{Witnesses of GME for Graph States}
The main goal of this paper is to generalize the results of Ref. \cite{OtfriedStabilizer} in two ways: first, we construct GME witnesses for graph states of any prime local dimension, and, second, we expand the construction of EWs to stabilizer subspaces of dimension higher than one.

As a first result of this work, we will construct a projector-based witness for a graph state $\ket{\mathcal{G}}$.
\begin{Theorem} \label{Theorem Projector}
Let us consider a graph state $\ket{\mathcal{G}}\in(\mathbb{C}^d)^{\otimes N}$, where $d$ is prime. The following operator
    \begin{equation}\label{eq: witness projector}
         \Wt = \frac{1}{d}\1 - \ket{\mathcal{G}}\!\bra{\mathcal{G}}
    \end{equation}
    is an entanglement witness detecting GME in the vicinity of $\ket{\mathcal{G}}$.
\end{Theorem}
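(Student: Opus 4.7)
The plan is to verify the two defining properties of a GME witness for $\widetilde{\mathcal{W}}$: nonnegativity of $\Tr(\widetilde{\mathcal{W}}\rho_{\mathrm{bisep}})$ on biseparable states, and strict negativity on at least one GME state. Since the set of biseparable states is convex and $\widetilde{\mathcal{W}}$ is linear, it suffices to prove the first property on pure biseparable states, i.e.\ states of the form $\ket{\phi}=\ket{\psi_Q}\tp\ket{\psi_{\overline{Q}}}$ for some bipartition $Q|\overline{Q}$. For such states one has
\begin{equation}
\bra{\phi}\widetilde{\W}\ket{\phi}=\tfrac{1}{d}-|\scalprod{\phi}{\mathcal{G}}|^{2},
\end{equation}
so the whole problem reduces to establishing the bound $|\scalprod{\phi}{\mathcal{G}}|^{2}\leq 1/d$ for every bipartition and every product state across it.

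To obtain this bound, I would use the standard fact that for any pure state $\ket{\mathcal{G}}$ with Schmidt decomposition $\sum_{i}\sqrt{\lambda_{i}}\ket{e_{i}}_{Q}\ket{f_{i}}_{\overline{Q}}$ across $Q|\overline{Q}$, the maximal overlap with a product state equals the largest Schmidt coefficient, i.e.\ $\max_{\ket{\phi}=\ket{\psi_{Q}}\tp\ket{\psi_{\overline{Q}}}}|\scalprod{\phi}{\mathcal{G}}|^{2}=\lambda_{\max}$. It therefore remains to show that $\lambda_{\max}\leq 1/d$ across any bipartition of a graph state.

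The key structural input is a well-known property of stabilizer states of prime local dimension: the reduced density matrix $\rho_{Q}=\Tr_{\overline{Q}}\proj{\mathcal{G}}$ is proportional to a projector onto a subspace whose dimension is a power of $d$; equivalently, all nonzero Schmidt coefficients are equal and the Schmidt rank $r$ satisfies $r\in\{d^{0},d^{1},\ldots\}$. I would justify this by observing that the stabilizer $\mathbb{S}$ of $\ket{\mathcal{G}}$ intersected with operators supported only on $Q$ yields a subgroup whose common $+1$-eigenspace equals the support of $\rho_{Q}$, and by Gottesman's dimension formula recalled in the preliminaries the corresponding subspace has dimension $d^{|Q|-\ell}$ for some integer $\ell$. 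Since $\ket{\mathcal{G}}$ is GME (as noted in the preliminaries), it is not a product state across any bipartition, so $r\geq d$ and consequently $\lambda_{\max}=1/r\leq 1/d$. Combining this with the reduction above gives $\bra{\phi}\widetilde{\W}\ket{\phi}\geq 0$ for every pure biseparable $\ket{\phi}$, hence by convexity $\Tr(\widetilde{\W}\rho_{\mathrm{bisep}})\geq 0$ for every biseparable $\rho_{\mathrm{bisep}}$.

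Finally, to see that $\widetilde{\W}$ is a nontrivial witness detecting GME in the vicinity of $\ket{\mathcal{G}}$, I would compute $\bra{\mathcal{G}}\widetilde{\W}\ket{\mathcal{G}}=1/d-1<0$ for any $d\geq 2$, which shows that $\ket{\mathcal{G}}$ itself is detected and, by continuity of the expectation value, so is a whole neighborhood of $\proj{\mathcal{G}}$. The main obstacle in the argument is the Schmidt-structure claim about graph states over prime $d$; while this is folklore, a clean treatment requires invoking the stabilizer intersection argument together with primality of $d$ (so that the generated subgroups have cardinalities that are powers of $d$), which is what makes the bound $\lambda_{\max}\leq 1/d$ tight.
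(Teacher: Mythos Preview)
Your argument is correct and follows the same overall reduction as the paper: both show that it suffices to bound the maximal product-state overlap $\max_{Q|\overline Q}\max_{\ket{\phi}\in\Phi_{Q|\overline Q}}|\scalprod{\phi}{\mathcal{G}}|^{2}\leq 1/d$, and both verify the negative expectation value on $\ket{\mathcal{G}}$ itself. The difference is in how the overlap bound is established. The paper invokes an external result (Corollary~1 of \cite{makuta2025frustrationgraphformalismqudit}) stating that the generalized geometric measure of any GME stabilizer subspace equals $1-1/d$, and simply specializes it to $\Pi_{V_{\mathbb S}}=\proj{\mathcal G}$. You instead give a self-contained argument: the reduced state $\rho_Q$ of a stabilizer state is maximally mixed on its support (because the partial trace kills all stabilizer elements with nontrivial action on $\overline Q$, leaving an average over a subgroup that projects onto a subspace), the Schmidt rank is therefore a power of $d$, and GME rules out rank $1$, forcing rank $\geq d$ and hence $\lambda_{\max}\leq 1/d$. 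Your route is more elementary and keeps the proof within the stabilizer machinery already set up in the preliminaries, at the cost of spelling out the reduced-state structure; the paper's route is shorter but delegates the substance to an external corollary. Both are valid, and your version has the advantage of making transparent exactly where primality of $d$ enters.
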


\begin{proof}
To show that $\widetilde{\mathcal{W}}$ is a witness of GME, we need to show that the condition 
in Eq. \eqref{EWcondition} is satisfied. At the same time, it is not difficult to observe that $\bra{\mathcal{G}}\Wt \ket{\mathcal{G}}=-(d-1)/d$. 

To prove that \eqref{EWcondition} holds in this case, let us consider a bipartition $Q|\overline{Q}$.
It is direct to realize that it is enough to prove that 
\begin{equation}\label{TintoDeVerano}
\operatorname{Tr}\left(\Wt \rho_{Q}\otimes \rho_{\overline{Q}}\right)\geqslant 0
\end{equation}
for any pair of density matrices $\rho_{Q}$ and $\rho_{\overline{Q}}$ because by linearity of the trace, the inequality will also follow for an arbitrary state which is separable across the bipartition $Q|\overline{Q}$.
One then observes that the above inequality will follow if
\begin{equation}
 \max_{Q|\overline{Q}} \max_{\ket{\phi}\in \Phi_{Q|\overline{Q}}} |\langle \phi\ket{\mathcal{G}}|^{2}\leqslant \frac{1}{d},
\end{equation}
where $\Phi_{Q|\overline{Q}}$ is the set of product states across $Q|\overline{Q}$. Notably, this fact follows directly from  \cite[Corollary 1]{makuta2025frustrationgraphformalismqudit} which states that for all GME stabilizer subspaces $V_{\mathbb{S}}$, the generalized geometric measure of entanglement $E_{GGM}(\cdot)$ \cite{PhysRevA.76.042309} equals
\begin{equation}
E_{GGM}(V_{\mathbb{S}}) = 1 - \max_{Q|\overline{Q}} \max_{\ket{\phi}\in \Phi_{Q|\overline{Q}}} \bra{\phi} \Pi_{V_{\mathbb{S}}}\ket{\phi} = 1 - \frac{1}{d},
\end{equation}
where $\Pi_{V_{\mathbb{S}}}$ is the projector onto $V_{\mathbb{S}}$. Taking $\Pi_{V_{\mathbb{S}}}=\ket{\mathcal{G}}\!\bra{\mathcal{G}}$, one obtains (\ref{TintoDeVerano}).
To complete the proof, we note that since the equation holds for any bipartition, the condition is also fulfilled by the linearity of the trace.
\end{proof}

The above entanglement witnesses as well as those that we obtain below exhibit some symmetries that are worth mentioning here: vertex relabelling in the graph $\mathcal{G}$ corresponds to relabelling of the subsystems in the corresponding entanglement witness, and, local complementation 
in the graph corresponds to implementing a local unitary transformation composed of Clifford operations to the graph state, and thus to the corresponding entanglement witness.

Let us now consider the experimental viability of $\Wt$. First, after simple computations, one can show that the threshold probability $p_{\textrm{limit}}$ equals 
\begin{equation}
p_{\textrm{limit}} = \frac{d-1}{d}\frac{1}{1-d^{-N}}.
\end{equation}
Next, to determine how many LMSs are needed to experimentally implement $\Wt$, we use the fact that a projector of any stabilizer state can be expressed as follows
\begin{equation}
    \proj{G} = \prod_{j=1}^N\frac{1}{d}\sum_{i=0}^{d-1}G_j^i,
\end{equation}
where $G_k$ are generators of a stabilizer of a given $\ket{\mathcal{G}}$. With the aid of this decomposition, it was shown in Ref. \cite{G_hne_2007} that in the case of $d=2$, the number of LMSs needed to implement experimentally projector-based witnesses grows linearly with the number of qubits $N$. Thus, although this witness offers high robustness against noise, it is impractical for larger systems. This fact is what motivates the search for other witnesses that offer easier experimental implementation.  

Notice that the projector-based witness (\ref{eq: witness projector}) involves measuring all the stabilizing operators. However, to uniquely identify a graph state, it is enough to measure only $N$ generators of the corresponding stabilizer. Therefore, it should be possible to limit the number of LMSs required to implement the stabilizer witness.

Taking this idea to its limits, one can construct a witness that consists of only generators of a stabilizer for a given $\ket{\mathcal{G}}$.
\begin{Theorem} \label{Theorem gens}
Consider a graph state $\ket{\mathcal{G}}$ of prime local dimension $d$, and the corresponding stabilizer $\mathbb{S} = \langle G_1,\ldots G_N\rangle$. Then the following operator 
\begin{equation}
\begin{aligned}
        &\W= \biggl\{N - \frac{d-1}{d} \biggl[1-\cos \biggl( \frac{2\pi}{d}\biggr)\biggr]\biggr\} \1 - 
    \frac{1}{2}\sum_{k=1}^{N}(G_{k} + G_{k}^{\dagger})
\end{aligned}
\end{equation}
is a witness detecting GME in the vicinity of $\ket{\mathcal{G}}$.
\end{Theorem}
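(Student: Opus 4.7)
The plan is to exploit two facts: first, that the generators $G_1,\ldots,G_N$ pairwise commute, so the operator $O \coloneqq \tfrac{1}{2}\sum_{k=1}^{N}(G_{k}+G_{k}^{\dagger})$ is diagonal in the common stabilizer eigenbasis; and second, that the bound $|\scalprod{\mathcal{G}}{\phi}|^{2}\leq 1/d$ for product states $\ket{\phi}$ across any bipartition, already invoked in the proof of Theorem~\ref{Theorem Projector}, controls the worst case completely. Together these reduce the proof to a one-line linear-programming estimate over a probability distribution on $\mathbb{Z}_d^{N}$.

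First, I would confirm that $\ket{\mathcal{G}}$ is detected by $\W$: since $G_k\ket{\mathcal{G}}=\ket{\mathcal{G}}$ for every $k$, one has $\bra{\mathcal{G}}O\ket{\mathcal{G}}=N$, whence $\bra{\mathcal{G}}\W\ket{\mathcal{G}}=-\tfrac{d-1}{d}[1-\cos(2\pi/d)]<0$. Next, by linearity of the trace and the definition~\eqref{eq: biseparable} of biseparability, showing $\Tr(\W\rho)\geq 0$ on all biseparable $\rho$ reduces to showing $\bra{\phi}\W\ket{\phi}\geq 0$ for every pure product state $\ket{\phi}=\ket{\phi_{Q}}\otimes\ket{\phi_{\overline{Q}}}$ across an arbitrary bipartition $Q|\overline{Q}$.

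For the core estimate, I would diagonalize $O$ in the orthonormal basis $\{\ket{\mathcal{G}_{\vec{n}}}\}_{\vec{n}\in\mathbb{Z}_d^N}$ of simultaneous eigenvectors of the generators, defined by $G_k\ket{\mathcal{G}_{\vec{n}}}=\omega^{n_k}\ket{\mathcal{G}_{\vec{n}}}$, in which $O$ has eigenvalues $\lambda_{\vec{n}}=\sum_{k=1}^{N}\cos(2\pi n_k/d)$. The unique maximum is $\lambda_{\vec{0}}=N$, attained on $\ket{\mathcal{G}_{\vec{0}}}=\ket{\mathcal{G}}$, while the second-largest value equals $\lambda_{\max}^{(2)}=N-[1-\cos(2\pi/d)]$, attained exactly when one coordinate of $\vec{n}$ is $\pm 1$ and the rest vanish. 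Writing $p_{\vec{n}}\coloneqq|\langle\mathcal{G}_{\vec{n}}|\phi\rangle|^{2}$ for the resulting probability distribution, I obtain
\[
\bra{\phi}O\ket{\phi}=\sum_{\vec{n}}p_{\vec{n}}\lambda_{\vec{n}}\leq p_{\vec{0}}N+(1-p_{\vec{0}})\lambda_{\max}^{(2)}=N-(1-p_{\vec{0}})\left[1-\cos\left(\tfrac{2\pi}{d}\right)\right].
\]
Plugging in the bound $p_{\vec{0}}=|\scalprod{\mathcal{G}}{\phi}|^{2}\leq 1/d$ (Corollary~1 of~\cite{makuta2025frustrationgraphformalismqudit}, already used in Theorem~\ref{Theorem Projector}) gives precisely $\bra{\phi}O\ket{\phi}\leq N-\tfrac{d-1}{d}[1-\cos(2\pi/d)]$, which cancels the scalar coefficient in $\W$ and closes the argument.

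The step I expect to be the main conceptual obstacle is recognizing that no uncertainty-type argument involving pairs of non-commuting restrictions of split generators to $Q$ and $\overline{Q}$ is needed, even though such reasoning is the most natural generalization of the qubit proof of T\'oth--G\"uhne and would in fact yield a weaker estimate in higher local dimension. Once one notices that the pairwise commutativity of the $G_k$'s reduces the entire biseparability constraint to the single scalar inequality $p_{\vec{0}}\leq 1/d$, the remaining calculation is essentially immediate.
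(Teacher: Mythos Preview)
Your proof is correct and is essentially the paper's argument with Lemma~\ref{lemma std} unwrapped: the paper proves the operator inequality $\W\geq\alpha\Wt$ with $\alpha=1-\cos(2\pi/d)$ by computing the two smallest eigenvalues of $\W-\alpha\Wt$ in the common eigenbasis $\{\ket{\mathcal{G}_{\vec n}}\}$, which is precisely your spectral-gap bound $\bra{\phi}O\ket{\phi}\leq p_{\vec 0}N+(1-p_{\vec 0})\lambda_{\max}^{(2)}$ combined with $p_{\vec 0}\leq 1/d$. The only difference is packaging---you bypass the auxiliary parameter $\alpha$ and Lemma~\ref{lemma std} and go straight to the scalar estimate, which is a little cleaner but mathematically identical.
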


To formulate the proof of the above theorem, we will have to use Lemma from \cite{OtfriedStabilizer}.
\begin{Lemma}\label{lemma std}
If $\Wt$ is a GME witness and there exists $\alpha > 0$ such that the following operator inequality
\begin{equation}
    \W \geqslant \alpha\Wt \label{eq:lemma}
\end{equation}
is satisfied, then $\W$ has nonnegative expected values on all biseparable states.
\end{Lemma}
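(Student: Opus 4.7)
The plan is to deduce the theorem from Lemma \ref{lemma std} by combining it with the projector-based witness $\Wt$ of Theorem \ref{Theorem Projector}. Concretely, I would exhibit a constant $\alpha>0$ for which the operator inequality $\W\geqslant\alpha\Wt$ holds, and separately verify that $\W$ takes a negative expectation value on $\ket{\mathcal{G}}$, ensuring the resulting witness is nontrivial near the target graph state.

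The starting point is the observation that the generators $G_{1},\ldots,G_{N}$ are pairwise commuting unitaries satisfying $G_{k}^{d}=\1$, since they all belong to the abelian stabilizer $\mathbb{S}$. Hence they can be jointly diagonalized, yielding an orthonormal eigenbasis $\{\ket{v_{\vec{l}}}\}_{\vec{l}\in\mathbb{Z}_{d}^{N}}$ of $\mathcal{H}^{\mathrm{tot}}$ with $G_{k}\ket{v_{\vec{l}}}=\omega^{l_{k}}\ket{v_{\vec{l}}}$, and with $\ket{\mathcal{G}}=\ket{v_{\vec{0}}}$ as the unique joint $+1$ eigenvector. In this basis, $\W$ acts as multiplication by $c-\sum_{k=1}^{N}\cos(2\pi l_{k}/d)$, where $c=N-\tfrac{d-1}{d}[1-\cos(2\pi/d)]$, while $\Wt$ acts as $-(d-1)/d$ on $\ket{v_{\vec{0}}}$ and as $1/d$ on every orthogonal eigenvector. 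In particular, $\bra{\mathcal{G}}\W\ket{\mathcal{G}}=-\tfrac{d-1}{d}[1-\cos(2\pi/d)]<0$ for any $d\geqslant 2$, so the nontriviality requirement is immediate.

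Next I would set $\alpha=1-\cos(2\pi/d)>0$ and check $\W-\alpha\Wt\geqslant 0$ eigenvalue by eigenvalue. On $\ket{v_{\vec{0}}}$ the inequality reduces to an equality by the very choice of the constant $c$. On each $\ket{v_{\vec{l}}}$ with $\vec{l}\neq\vec{0}$, after cancelling the $c$ term against $\alpha/d$, the required inequality becomes
\begin{equation}
\sum_{k=1}^{N}\cos(2\pi l_{k}/d)\leqslant N-1+\cos(2\pi/d).
\end{equation}
This is the only genuinely non-bookkeeping step. It follows from two elementary facts: $\cos(2\pi l/d)\leqslant 1$ for every $l\in\mathbb{Z}_{d}$; and the maximum of $\cos(2\pi l/d)$ over the nonzero elements $l\in\mathbb{Z}_{d}\setminus\{0\}$ equals $\cos(2\pi/d)$, attained at $l=1$ (and at $l=d-1$), simply because cosine is decreasing on $[0,\pi]$ and the angle $2\pi l/d\bmod 2\pi$ is closest to $0$ exactly at these two values of $l$. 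Since at least one component of $\vec{l}$ is nonzero, I bound that component by $\cos(2\pi/d)$ and each of the remaining $N-1$ components by $1$.

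With this operator inequality in hand, Lemma \ref{lemma std} together with Theorem \ref{Theorem Projector} immediately upgrades $\W$ to a GME witness, and the computation of $\bra{\mathcal{G}}\W\ket{\mathcal{G}}$ above certifies detection in a neighbourhood of $\ket{\mathcal{G}}$. I do not foresee any substantive obstacle here: the argument is entirely spectral, and the only nontrivial ingredient is the elementary cosine bound sketched above, which is also what dictates the precise value of the constant appearing in the theorem.
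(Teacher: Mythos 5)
Your proposal does not prove the statement in question. The statement is Lemma~\ref{lemma std} itself --- the general implication that $\W\geqslant\alpha\Wt$ with $\alpha>0$ and $\Wt$ a witness forces $\W$ to be a witness --- whereas what you have written is a proof of Theorem~\ref{Theorem gens} (the generator-based witness with constant $c=N-\frac{d-1}{d}[1-\cos(2\pi/d)]$). Worse, as a proof of the Lemma your argument is circular: in your final paragraph you explicitly invoke Lemma~\ref{lemma std} to ``upgrade $\W$ to a GME witness,'' i.e.\ you assume the very statement you were asked to establish. Nothing in your spectral analysis of the generators addresses why an operator dominating a positive multiple of a witness must itself be nonnegative on all biseparable states.

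The Lemma has a two-line proof, which is what the paper gives: for any biseparable state $\rho$, the operator inequality yields $\Tr(\W\rho)\geqslant\alpha\Tr(\Wt\rho)$, and since $\Wt$ is a witness we have $\Tr(\Wt\rho)\geqslant0$, so $\Tr(\W\rho)\geqslant0$ because $\alpha>0$. (Strictly speaking one should also exhibit a GME state on which $\W$ is negative for it to be a nontrivial witness; the paper's proof omits this as well, and in the applications it is checked separately on the target state.) For what it is worth, your eigenvalue computation is a correct and complete proof of Theorem~\ref{Theorem gens}, matching the paper's argument for that theorem, including the optimal choices $\alpha=1-\cos(2\pi/d)$ and the cosine bound $\sum_{k}\cos(2\pi l_{k}/d)\leqslant N-1+\cos(2\pi/d)$ for $\vec{l}\neq\vec{0}$ --- but it is the wrong target here.
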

\begin{proof}
    Eq. (\ref{eq:lemma}) implies that $\TrM{\W\rho} \geqslant \alpha\, \Tr (\Wt\rho)$. In particular, since $\Tr({\Wt\rho}) \geqslant 0$ for all separable states and $\alpha > 0$, then also $\Tr{(\W\rho)}\geqslant 0$ for all separable states, what finishes the proof.
\end{proof}
Now, if $\W$ has a negative expected value on some entangled state, this implies that $\W$ is an EW. In particular, if $\W = \alpha\Wt$, the witness $\W$ would detect as many states as $\Wt$, but, in general, $\W$ detects fewer states.

With this lemma, we can now proceed to the proof of Theorem \ref{Theorem gens}.

\begin{proof}
The idea of the proof is as follows. We start with the operator 
\begin{equation}\label{eq: witness c}
\W = c\1 - \frac{1}{2}\sum_{k=1}^N\left ( G_k + G_k^{\dagger} \right )
\end{equation}
and, taking $\Wt$ to be projector-based witness from Theorem \ref{Theorem Projector}, we make use of Lemma \ref{lemma std}. Then, by construction, we show that for $\alpha = 1 - \cos\left(\frac{2\pi}{d}\right)$ and $c = \bigl\{N - \frac{d-1}{d} \bigl[1-\cos \bigl( \frac{2\pi}{d}\bigr)\bigr]\bigr\}$ $\W \geqslant \alpha\Wt$ .

Substituting Eq. \eqref{eq: witness projector} and \eqref{eq: witness c} to Lemma \ref{lemma std} we get
\begin{equation}
c\1 - \frac{1}{2}\sum_{k=1}^N\left ( G_k + G_k^{\dagger} \right ) \geqslant \alpha \left(\frac{1}{d}\1 - \ket{\mathcal{G}}\!\bra{\mathcal{G}}\right).
\end{equation}
The projector $\ket{\mathcal{G}}\!\bra{\mathcal{G}}$ can be decomposed as a sum of stabilizing operators. Since all of the stabilizing operators mutually commute, they share a common eigenbasis; hence, the above inequality can be reduced to estimating the lowest eigenvalue of $\W-\alpha\Wt$ and choosing such $\alpha > 0$ that the lowest eigenvalue is non-negative. A general expression for the eigenvalues of $\W-\alpha\Wt$ reads

\begin{equation}
(\W-\alpha\Wt) \ket{v_{\vec{q}}}=\lambda_{\vec{q}}\ket{v_{\vec{q}}},
\end{equation}
where $\vec{q}=\{q_{1},\ldots,q_{N}\}$ such that $q_k\in\{0,\ldots,d-1\}$ for all $k$, and

\begin{equation}
\lambda_{\vec{q}} = c - \frac{\alpha}{d} - \frac{1}{2}\sum_{k=1}^{N}\left(\omega^{q_{k}} + \omega^{-q_{k}}\right) + \alpha\prod_{k=1}^{N}\frac{1}{d}\sum_{j=0}^{d-1}\omega^{jq_{k}}.
\end{equation}
There are two cases when the eigenvalues are the lowest: when all $\omega^{q_{k}} = 1$ or when all $\omega^{q_{k}} = 1$ except one $\omega^{q_{t}} \neq 1$, which we denote by $\vec{q}_{0}$ and $\vec{q}_{1}$ respectively. If we make sure they are positive, the others will be as well. Those two cases lead to
\begin{eqnarray}
 \lambda_{\vec{q}_0}&=&c - \frac{\alpha}{d} - N + \alpha,\nonumber\\
    \lambda_{\vec{q}_1}&=&c - \frac{\alpha}{d} - N + 1 - \text{Re}({\omega^{q_{t}}}).
\end{eqnarray}
%
The eigenvalue $\lambda_{\vec{q}_{1}}$ can be estimated further by noticing that $\text{Re}({\omega^{q_{t}}})$ is the highest for $q_{t} = 1$. We want both $\lambda_{\vec{q}_i} \geq 0$. Solving this gives us optimal $c = N - \left(1-\frac{1}{d}\right) \left(1-\cos \left( \frac{2\pi}{d} \right) \right)$ and $\alpha = 1 - \cos(\frac{2\pi}{d})$. Notice that $\forall_{d}$ $\alpha > 0$, and that $\Tr({\W\proj{\mathcal{G}})} < 0$. Thus, we proved that 
\begin{equation}
\W = \biggl\{N - \frac{d-1}{d} \biggl[1-\cos \biggl( \frac{2\pi}{d}\biggr)\biggr]\biggr\}\1 - \frac{1}{2}\sum_{k=1}^{N}(G_{k} + G_{k}^{\dagger}) 
\end{equation}
detects GME in the vicinity of $\ket{\mathcal{G}}$.
\end{proof}

This witness has the minimal number of LMS required to detect GME since it measures only generators of a stabilizer, but this number depends on the size of the target graph state. On the other hand, one can quickly check that $p_{\textrm{limit}} = 1/N$, so for larger $N$ any amount of noise will make this witness unable to detect GME.

\section{Optimal Witnesses for Graph States}
The witnesses discussed so far were either hard to implement experimentally or their robustness against noise decreased unboundedly as the size of the system grew. In this section, we provide witnesses with minimal LMSs required to detect GME and the highest robustness against noise that can be obtained by Lemma \ref{lemma std}. We prove it in general for a graph state and then consider examples such as GHZ and cluster states.

\begin{Theorem} \label{Theorem G}
Consider a graph state $\ket{\mathcal{G}}$ of prime local dimension $d$, corresponding stabilizer $\mathbb{S} = \langle G_1,\ldots,G_N\rangle$ and the associated graph $\mathcal{G} = (V,E)$ with chromatic number $K$ and the corresponding coloring $\{C_{i}\}_{i=1}^{K}$. Then, the following operator
\begin{equation}
\begin{aligned}
&\Wo{\mathcal{G}} = [(K-1)d + 1]\1 - d\sum_{i=1}^{K}\prod_{j\in C_{i}}\frac{1}{d}\sum_{n=0}^{d-1}G_j^n 
\end{aligned}
\end{equation}
detects GME in the vicinity of $\ket{\mathcal{G}}$. 
\end{Theorem}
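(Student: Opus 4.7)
The plan is to follow the same pattern as in Theorem \ref{Theorem gens}: apply Lemma \ref{lemma std} by comparing $\Wo{\mathcal{G}}$ with the projector witness $\Wt=\frac{1}{d}\1-\ket{\mathcal{G}}\!\bra{\mathcal{G}}$ of Theorem \ref{Theorem Projector}. Concretely, I would aim to verify the operator inequality $\Wo{\mathcal{G}}\geqslant \alpha\,\Wt$ with the natural candidate $\alpha=d$, which by Lemma \ref{lemma std} immediately certifies that $\Wo{\mathcal{G}}$ is a GME witness detecting $\ket{\mathcal{G}}$.

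To turn this operator inequality into a scalar one, I would exploit the fact that every term entering $\Wo{\mathcal{G}}-d\Wt$ is built from the mutually commuting generators $G_{1},\dots,G_{N}$, so the difference is diagonal in their common eigenbasis $\{\ket{v_{\vec q}}\}_{\vec q\in\mathbb{Z}_d^N}$ with $G_k\ket{v_{\vec q}}=\omega^{q_k}\ket{v_{\vec q}}$ and $\ket{v_{\vec 0}}=\ket{\mathcal{G}}$. The elementary identity $\frac{1}{d}\sum_{n=0}^{d-1}\omega^{nq}=\delta_{q,0}$ then shows that the color-class block $\prod_{j\in C_i}\frac{1}{d}\sum_{n}G_j^{n}$ acts on $\ket{v_{\vec q}}$ as the indicator that $q_j=0$ for every $j\in C_i$. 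Writing $N_0(\vec q)$ for the number of color classes on which $\vec q$ vanishes identically, the whole problem reduces to showing
\begin{equation}
(K-1)d+1-d\,N_0(\vec q)-1+d\,\delta_{\vec q,\vec 0}\;\geqslant\; 0\qquad\forall\,\vec q\in\mathbb{Z}_d^N.
\end{equation}

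The hard part, rather than computational, is conceptual: it is the place where the chromatic number enters decisively, through a short case split on $\vec q$. When $\vec q=\vec 0$ every color class is vanishing so $N_0=K$, and a direct check gives equality. When $\vec q\neq\vec 0$ some coordinate $q_{k^{*}}$ is nonzero, and since the classes $\{C_i\}_{i=1}^{K}$ partition the vertex set, the unique class containing $k^{*}$ cannot contribute to $N_0$, forcing $N_0(\vec q)\leqslant K-1$; plugging this bound in again yields nonnegativity, with equality on the $N_0=K-1$ stratum. The fact that the two saturating strata $N_0\in\{K-1,K\}$ are both extremal is what pins down the coefficient $(K-1)d+1$ as the tightest choice possible within this Lemma-\ref{lemma std} approach, and therefore also what justifies calling $\Wo{\mathcal{G}}$ optimal. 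As a by-product, since the vertices inside each color class $C_i$ are pairwise non-adjacent, the generators appearing in $\prod_{j\in C_i}$ commute locally, so $\Wo{\mathcal{G}}$ is measurable with exactly $K$ local measurement settings, matching the motivation behind the construction.
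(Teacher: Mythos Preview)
Your proposal is correct and follows essentially the same approach as the paper: both diagonalize $\Wo{\mathcal{G}}-\alpha\Wt$ in the common eigenbasis of the generators, reduce via $\frac{1}{d}\sum_n\omega^{nq}=\delta_{q,0}$ to a scalar inequality governed by how many color classes the eigenvector label vanishes on, and identify the two extremal cases $\vec q=\vec 0$ and ``$\vec q$ nonzero on exactly one color class'' as those pinning down $\alpha=d$ and $c=(K-1)d+1$. The only cosmetic difference is that the paper leaves $c$ and $\alpha$ free and derives them by equating the two extremal eigenvalue conditions, whereas you fix $\alpha=d$ up front and verify the resulting inequality, noting afterward that the saturation on both strata shows the constant is tight.
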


\begin{proof}
For a graph state $\ket{\mathcal{G}}$ associated with the graph $\mathcal{G}$ whose chromatic number is $K$, we need at least $K$ LMSs to detect GME. Let us consider the common eigenbasis of generators $G_{i}$ of the stabilizer of $\ket{\mathcal{G}}$, which we denote by $\mathcal{E}:=\{\ket{\vec{g}_i}\}_{i=1}^{d^N}$ such that 
\begin{equation}
G_{j}\ket{\vec{g}_i}=\omega^{g_{j}}\ket{\vec{g}_i},
\end{equation}
where each eigenvector $\ket{\vec{g}_i}\in(\mathbb{C}^d)^{\otimes N}$ is parameterized by $N$ numbers $g_i\in\{0,\ldots,d-1\}$, that is, $\ket{\vec{g}_i}\equiv|g_1,\ldots,g_N\rangle$.
Note that this is not a computational basis: in fact, many of the states $\ket{\vec{g}_i}$ are entangled.

Next, let us consider the following entanglement witness
\begin{equation}
\begin{aligned}
&\Wo{\mathcal{G}} = c\1 - d\sum_{i=1}^{K}\prod_{j\in C_{i}}\frac{1}{d}\sum_{n=0}^{d-1}G_j^n,
\end{aligned}
\end{equation}
where $c\in \mathbb{R}$ is an arbitrary coefficient which we determine below. Notice that in the basis $\mathcal{E}$, this witness can be expressed as
\begin{equation}
\DIFdelbegin 
\DIFdelend \DIFaddbegin \begin{aligned}
 \Wo{\mathcal{G}} &= c\1 - d\sum_{i=1}^{K}\prod_{j\in C_{i}}\frac{1}{d}\sum_{n=0}^{d-1}\sum_{k=1}^{d^N} \omega^{n g_{j}}\ket{\vec{g}_k}\!\bra{\vec{g}_k}\\
&= c\1 - d\sum_{i=1}^{K}\sum_{k=1}^{d^N} \left(\prod_{j\in C_{i}}\delta_{g_{j},0}\right)\ket{\vec{g}_k}\!\bra{\vec{g}_k},
\end{aligned}\DIFaddend 
\end{equation}
where $\delta_{g_{j},0}$ is the Kronecker delta. This form of $ \Wo{\mathcal{G}}$ allows us to easily compute its trace as well as the expectation value in the graph state $|\mathcal{G}\rangle$,
\begin{equation}
\begin{aligned}
\operatorname{Tr}( \Wo{\mathcal{G}})&=c d^{N} -d \sum_{i=1}^{K}d^{N-|C_{i}|},\\
\bra{\mathcal{G}}\Wo{\mathcal{G}}\ket{\mathcal{G}}&=c-d K.
\end{aligned}
\end{equation}
Now, using Eq. \eqref{eq: threshold probability} we can calculate the threshold probability as a function of $c$, giving
\begin{equation}
\p = \frac{d K-c}{dK  -d \sum_{i=1}^{K}d^{-|C_{i}|}}.
\end{equation}
Clearly, if we wish to increase the noise robustness of the witness, we need to maximize $\p$, and so we need to find the minimal $c$.

To this end, let us examine for which values of $c$, the operator $\Wo{\mathcal{G}}$ is a witness of genuine multipartite entanglement. For this purpose, we consider the following inequality  
\begin{equation}\label{Ineq}
\Wo{\mathcal{G}}-\alpha\Wt \geqslant 0,
\end{equation}
where $\Wt$ is the projector-based witness \eqref{eq: witness projector} and $\alpha$ is real-valued parameter and, apply Lemma \ref{lemma std}. The eigenvalues of the operator on the left-hand side of (\ref{Ineq})
are given by
\begin{equation}
\lambda_{\vec{g}}=c-\frac{\alpha}{d}+\alpha\delta_{\vec{g}_k,\vec{0}} -d\sum_{i=1}^{K} \prod_{j\in C_{i}}\delta_{g_{j},0},
\end{equation}
where $\vec{0}\in (\mathbb{C}^{d})^{\otimes N}$ is a zero vector. There are two ways in which we could achieve the minimal eigenvalue. First, by achieving the absolute minimum of the last term in the above sum, which is the case for $\vec{g}_k=\vec{0}$. However, this also means that we have a positive contribution from the term $\alpha\delta_{\vec{g}_k,\vec{0}}$. The second option is to minimize the last term, but under the condition $\alpha\delta_{\vec{g},\vec{0}}=0$. This is achieved if $g_{j}=0$ for all $C_{i}$ except for one. 
All together, these two conditions give us
\begin{equation}
\begin{aligned}
c-\frac{\alpha}{d}+\alpha -dK &\geqslant 0,\\
c-\frac{\alpha}{d} -d(K-1)&\geqslant 0.
\end{aligned}
\end{equation}
The minimal $c$ is achieved when both of these bounds coincide, which implies $\alpha=d$ and so the minimal $c$ equals
\begin{equation}
c= (K-1)d+1.
\end{equation}
It is clear that the operator $\Wo{\mathcal{G}}$ has a negative expected value on $\ket{\mathcal{G}}$, which ends the proof.
\end{proof}

Notice, that since $c=(K-1)d+1$, the threshold probability $\p$ of $\Wo{\mathcal{G}}$ equals
\begin{equation}
    \p = \frac{d-1}{d}\frac{1}{K  - \sum_{i=1}^{K}d^{-|C_{i}|}}.
\end{equation}

With these results, we can consider what state would allow for the most robust detection of GME in this construction. From the form of $\p$ it is clear that states that require more LMSs are also less robust against noise. Since the minimal number of LMSs required to detect any entanglement is $K=2$, then the optimal state $\ket{\mathcal{G}}$ for GME detection with $\Wo{\mathcal{G}}$ has to have $K=2$. Moreover, the maximal $\p$ for $K=2$ is achieved when $|C_{1}|=1$ and $|C_{2}|=N-1$, which is exactly the case for the star graph.

The graph state associated to a star graph is equivalent, up to local unitary transformation, to the GHZ state. The witness $\Wo{GHZ}$ tailored to the GHZ state is given by
\begin{equation}
    \Wo{GHZ} = (d+1)\1 - d\left( \frac{1}{d}\sum_{n=0}^{d-1}G_1^n + \prod_{i=2}^N\frac{1}{d}\sum_{n=0}^{d-1}G_i^n\right), \label{eq:EW_GHZ}
\end{equation}
where $G_{i}$ are the generators of the GHZ stabilizer \eqref{eq: generators ghz}. The threshold probability of this witness equals
\begin{equation}
    \p = \frac{d-1}{2d - 1 - d^{2-N}} .\label{eq:plim_GHZ}
\end{equation}

To compare those results with other states, we may consider the cluster state. The graph corresponding to the cluster state also has chromatic number $K=2$, but both colors have either the same number of generators or differ by one generator, for even and odd $N$, respectively.  In this case, the entanglement witness is given by
\begin{equation}
    \Wo{C} = (d+1)\1 - d\left (\prod_{\textrm{even } i}\frac{1}{d}\sum_{n=0}^{d-1}G_i^n + \prod_{\textrm{odd } i}\frac{1}{d}\sum_{n=0}^{d-1}G_i^n \right ), \label{eq:EW_C}
\end{equation}
where $G_{i}$ are generators of the cluster state stabilizer \eqref{eq: generators cluster}. As compared to the witness tailored to the GHZ state, this one is slightly less robust against white noise because the 
critical value of $p$ reads
\begin{equation}
    \p = 
    \begin{cases}
        &\displaystyle\frac{d-1}{2d - 2d^{-N/2+1}} \quad \text{for even $N$},\\
        &\displaystyle\frac{d-1}{2d - d^{-N/2+3/2} - d^{-N/2+1/2}} \quad \text{for odd $N$} \label{eq:plim_C}.
    \end{cases}
\end{equation}

On Fig. \ref{fig:GHZvsCluster}, we provide plots comparing $\p$ of those states in 4 different cases.
\begin{figure*}[t]
    \centering
    \begin{minipage}[t]{0.49\linewidth}
    \centering
        \includegraphics[width=1\linewidth]{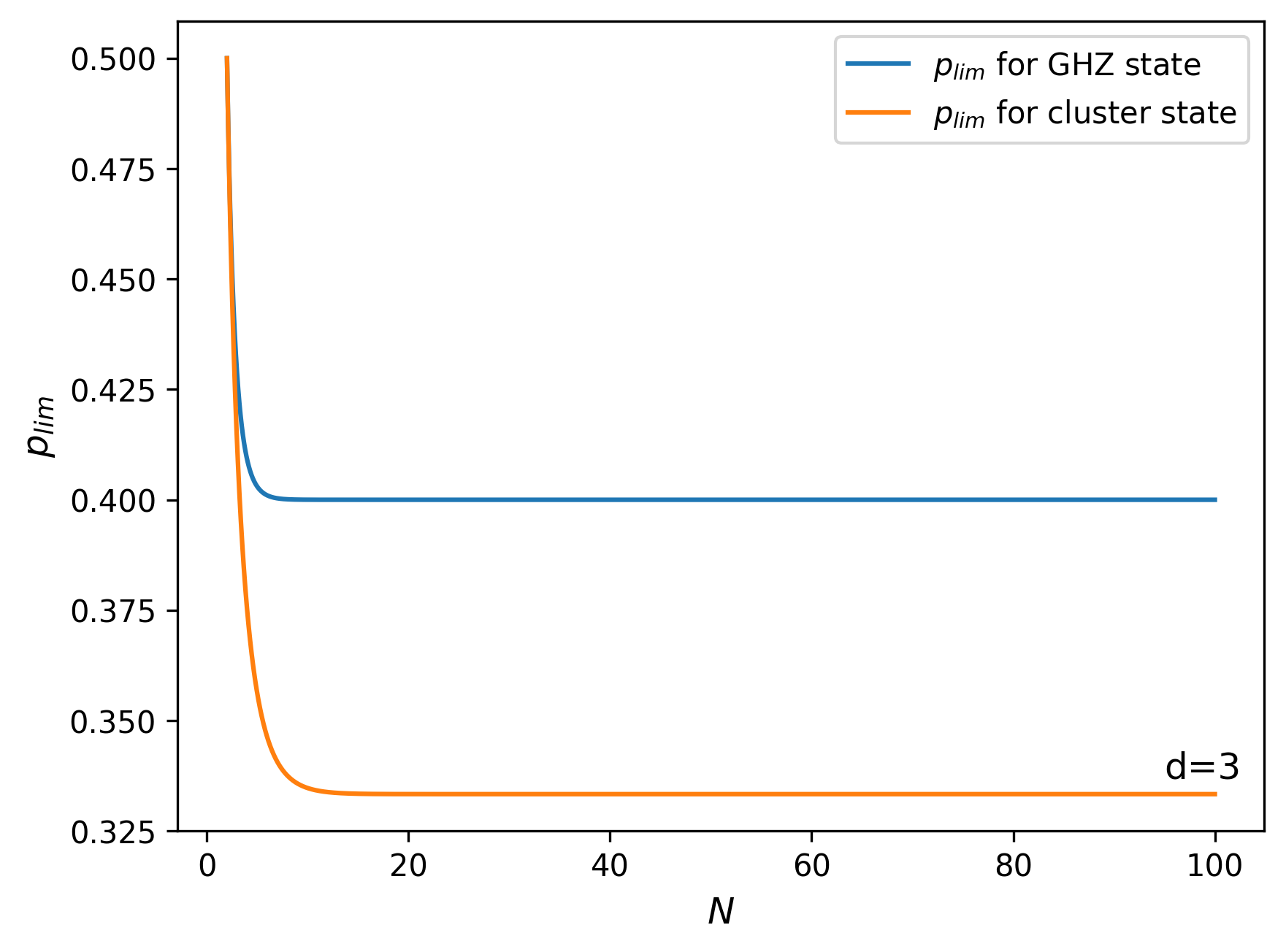}\\
        \includegraphics[width=1\linewidth]{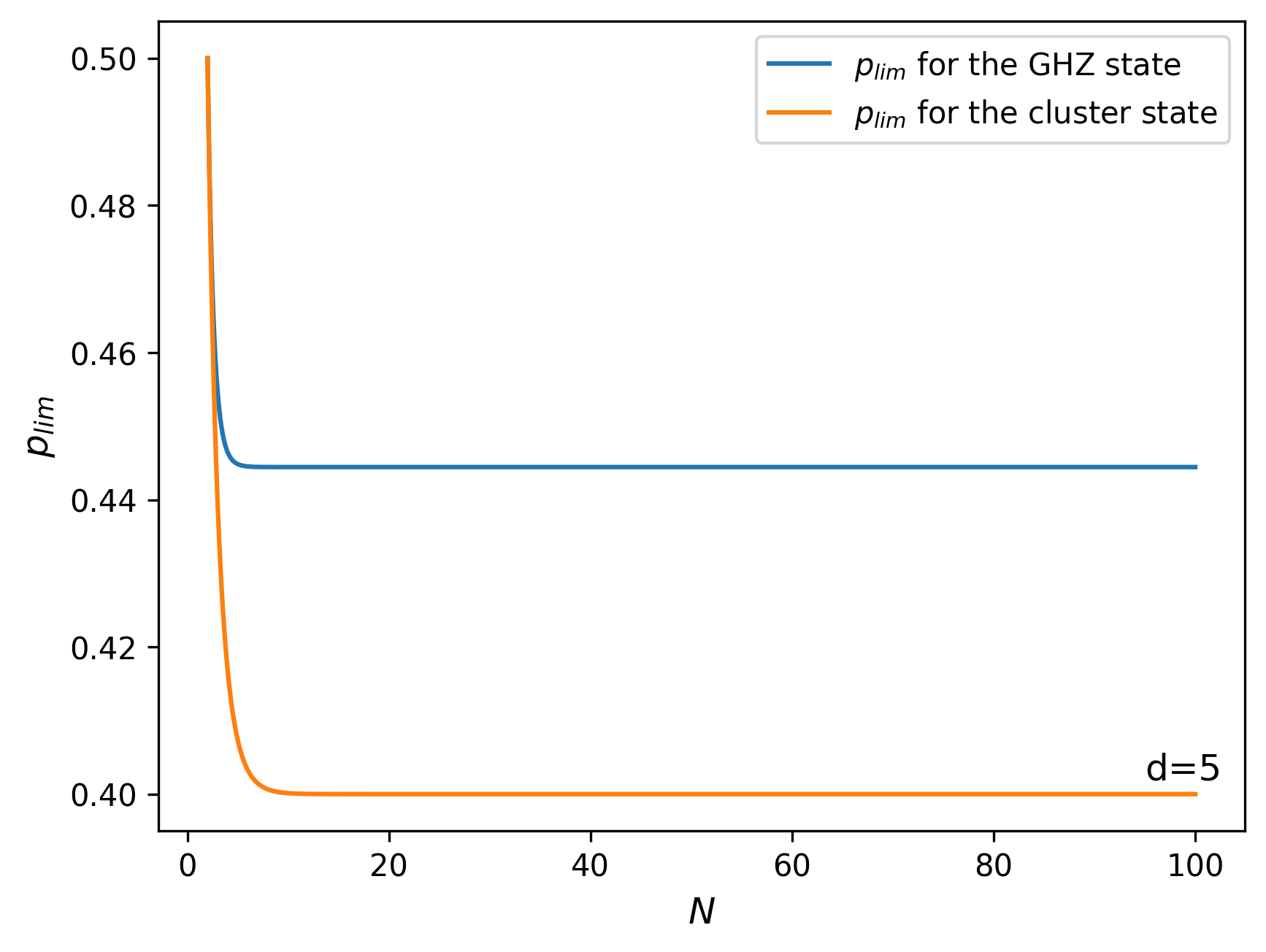}
    \end{minipage}
    \begin{minipage}[t]{0.49\linewidth}
    \centering
        \includegraphics[width=1\linewidth]{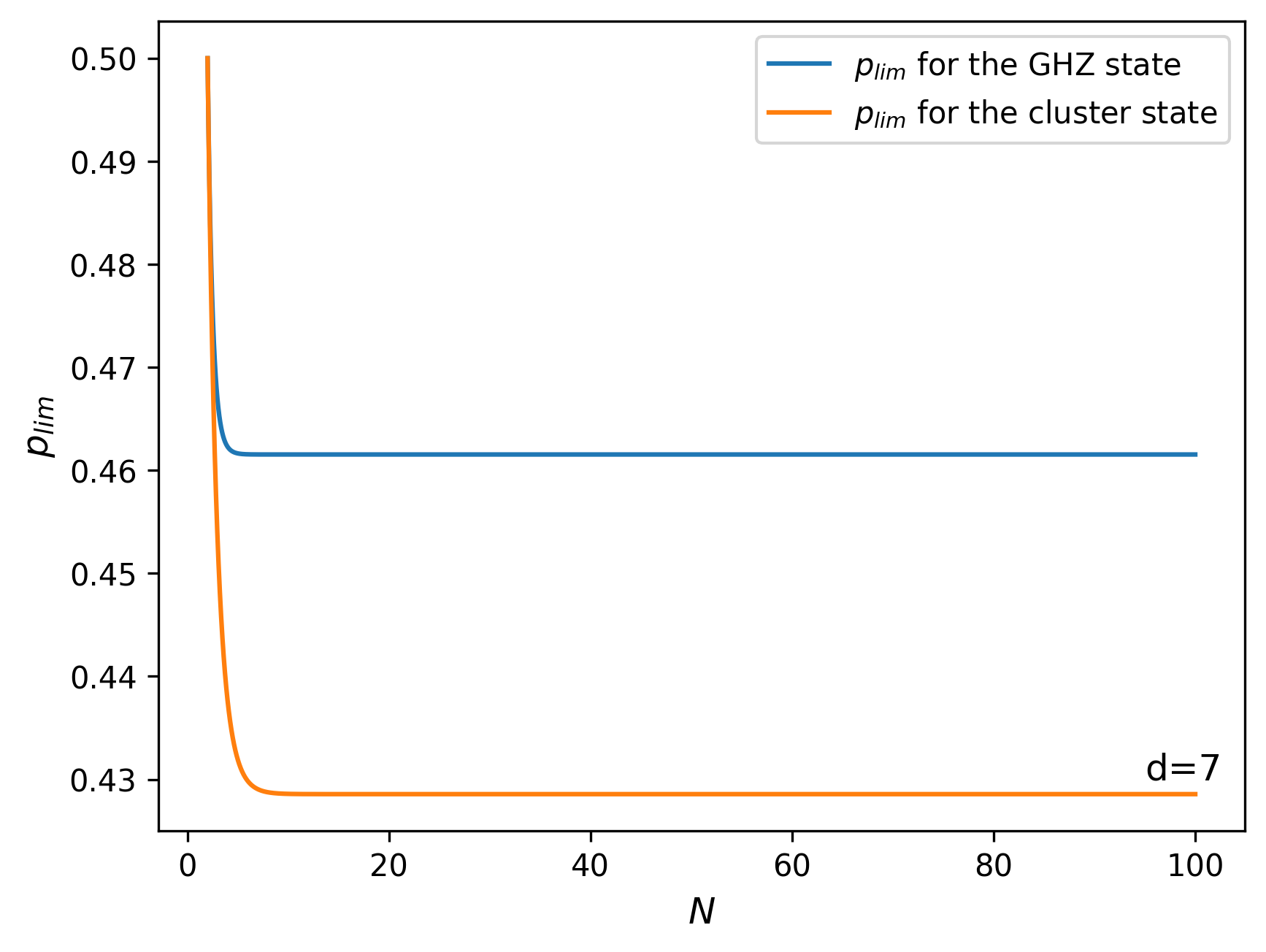}\\
        \includegraphics[width=1\linewidth]{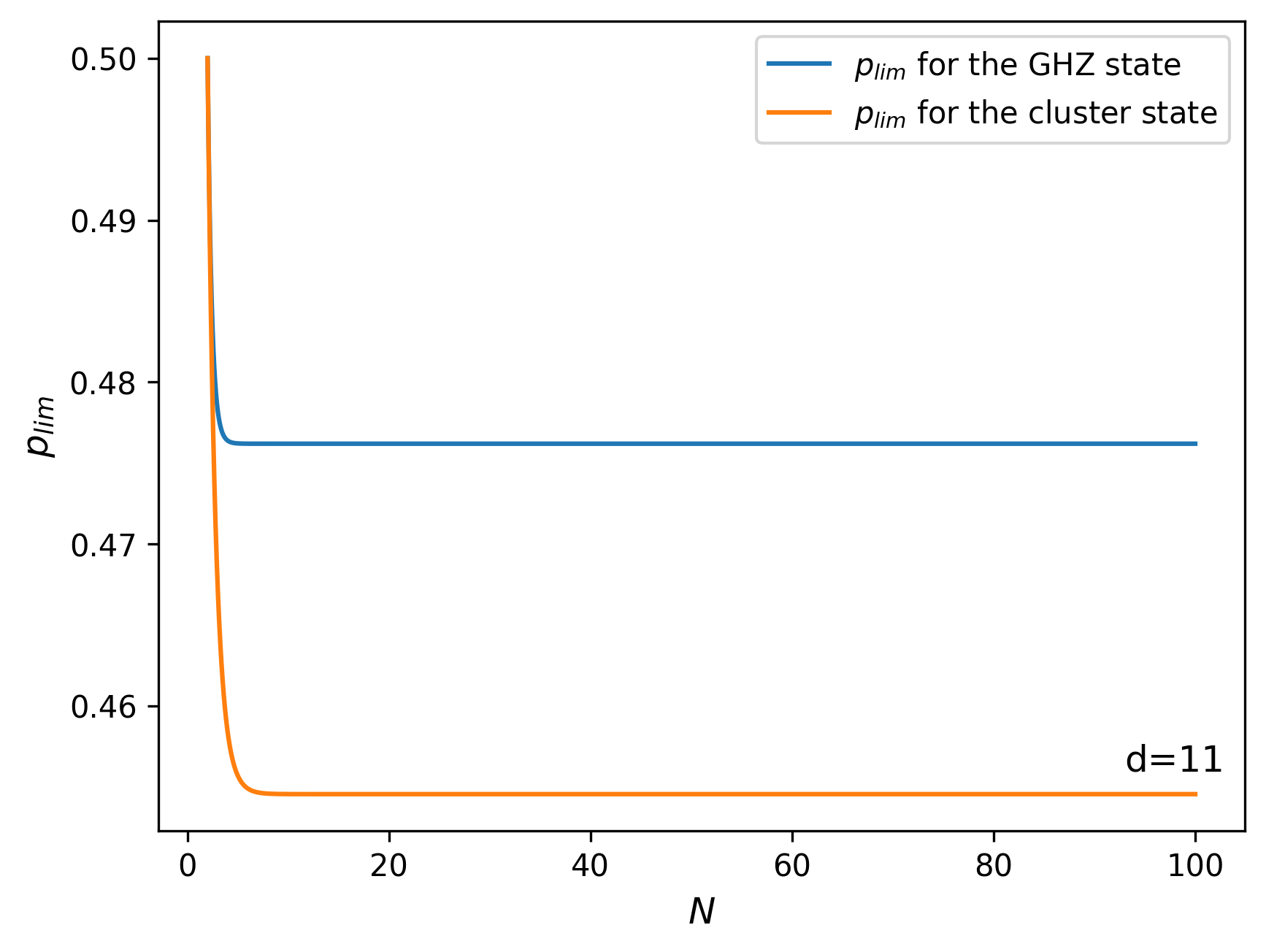}
    \end{minipage}
    \caption{\textbf{Comparison between the robustness to white noise of the GHZ and the cluster state}. The plots above show how $\p$ of EW for the GHZ state and the cluster state changes with $N$ for four different values of $d$. It can be observed that in each case, the EW for the GHZ state outperforms the EW for the cluster state. Moreover, for large enough $N$, both states asymptotically reach $p_{sat}$, different for each EW. Comparing all four plots, we see that, in both cases, $p_{sat}$ increases with $d$. For large $d$, $p_{sat}$ of both EW will approach $p_{sat} = 1/2$.}
    \label{fig:GHZvsCluster}
\end{figure*}

\section{Optimal Witnesses for GME stabilizer subspaces}

In this section, we first generalize the notions that we used for graph states to GME stabilizer subspaces. Moreover, we show a method to determine the minimal number of LMSs needed to detect GME for every state in GME stabilizer subspaces, analogous to the method for graph states. Then, we construct witnesses, including the most general results of this work - entanglement witnesses detecting GME for every state in the GME stabilizer subspace and its vicinity.

\subsection{Measures for entanglement witnesses for GME subspaces}
Before we move to the construction of the optimal EW for a GME stabilizer subspace, we first need to generalize the notion of critical probability $p_{\mathrm{limit}}$ introduced in the previous sections to subspaces. 

A natural generalization in the case of robustness against noise would be to consider the following noise model
\begin{equation}
    \rho(p_{\textrm{noise}}) = p_{\textrm{noise}}\frac{\1}{d^N} + (1-p_{\textrm{noise}})\rho_{\Vs},
\end{equation}
where $\rho_{\Vs}$ is an arbitrary state supported on $\Vs$. $\p$ corresponding to this noise model is 
\begin{equation}
    \p = \frac{-\TrM{\rho_{\Vs}\W}}{d^{-N}\TrM{\mathcal{W}}-\TrM{\rho_{\Vs}\W}}.
\end{equation}
Since, in general, such a value could in principle depend on the specific state, one can also define
\begin{equation}
    \p(\Vs) = \min_{\ket{\psi}\in \Vs}\p(\ket{\psi}).
\end{equation}
However, it is not difficult to observe that as long as witnesses are constructed solely from a sum of stabilizing operators, $\p$ is the same for any state $\rho_{\Vs}$, and so this issue does not manifest. Thus, in what follows, we consider a $\p$ for a particular state, namely the maximally mixed state in the subspace $\Vs$: $\rho_{\Vs} = 1/d^{N-k}P_{\Vs}$, where $P_{\Vs}$ is a projector onto the $\Vs$ and $k$ is the number of generators of the stabilizer $\mathbb{S}$. 

\subsection{Determining number of LMSs for GME subspaces}

We can also say something about determining LMSs for measuring stabilizer operators of a subspace. In the case of graph states, this problem was equivalent to finding the chromatic number of the corresponding graph. It turns out that a similar equivalence can be made in the case of a subspace.

First, we need to introduce the commutation matrices formalism, first proposed in Ref. \cite{Englbrecht2022transformationsof}. Consider a stabilizer $\mathbb{S} = \langle G_1,\ldots,G_k \rangle$. Recall that each $G_i = A_1\otimes\ldots\otimes A_N$, where $A_i$ are the Weyl-Heisenberg matrces that satisfy $A_iA_j - \omega^{\tau_{ij}}A_jA_i=0$, where $\tau_{ij}\in\{0,\ldots,d-1\}$ is determined by the exact form of $A_i$ and $A_j$. Now, with each site we associate a matrix $C^\alpha\in M_{k\times k}(\mathbb{Z}_{d})$ such that $(C^\alpha)_{ij} = \tau_{ij}$ of $A_i$ and $A_j$ at site $\alpha$ of $G_i$ and $G_j$ respectively. Let us then consider
\begin{equation}
    \tilde{C} = \sum_{\alpha=1}^NC^\alpha,
\end{equation}
where, importantly, $\tilde{C}\in M_{k\times k}(\mathbb{Z})$, i.e., the above sum is not modulo $d$. It is direct to see that if $(C)_{ij} = 0$, then $G_i$ and $G_j$ commute locally.

With matrix $\tilde{C}$ we can associate an adjacency matrix of the graph $\Gamma$ such that if $(\tilde{C})_{ij}\neq 0$ then $\Gamma_{i,j} = 1$ and 0 otherwise. By analogy to the graph state case, it is clear that the chromatic number of this graph will correspond to the minimal number of LMSs needed to measure all generators, and that vertices in the same color correspond to stabilizer operators that are measurable in the same LMS. Indeed, in the case of qubit graph states $\tilde{C}/2 = \Gamma$. Note, however, that this method has the same problem as in the case of graph states. Many graphs can be associated with the same stabilizer subspace and, in principle, their chromatic number $K$ may differ. So to find a proper lower bound on the number of LMSs needed to detect GME, we should minimize $K$ over all graphs that give rise to the same stabilizer subspace.

To illustrate this formalism let us present two simple examples. First, let us consider the three-qubit GHZ state. We want to show that, indeed, $\tilde{C}/2 = \Gamma$ holds in this case. 

The commutation matrices for each site for the GHZ state are given by
\begin{equation}
  C^1 = 
    \begin{bmatrix}
        0&1&0\\
        1&0&0\\
        0&0&0
    \end{bmatrix},
    \quad
    C^2=
    \begin{bmatrix}
        0&1&1\\
        1&0&0\\
        1&0&0
    \end{bmatrix},
    \quad
    C^3=
    \begin{bmatrix}
        0&0&1\\
        0&0&0\\
        1&0&0
    \end{bmatrix},
\end{equation}
and their sum amounts to,
\begin{equation}
    \tilde{C} = 
    \begin{bmatrix}
        0&2&2\\
        2&0&0\\
        2&0&0
    \end{bmatrix}
    =2\Gamma_{\mathrm{GHZ}}.
\end{equation}

The second example concerns a stabilizer subspace corresponding to the five-qubit code \cite{FiveQubit}. The generators of its stabilizer are given by
\begin{equation}
\begin{aligned}
G_1 = X_{1}Z_{2}Z_{3}X_{4},\qquad G_2 = X_{2}Z_{3}Z_{4}X_{5},\\
G_3=X_{1}X_{3}Z_{4}Z_{5},\qquad G_4 = Z_{1}X_{2}X_{4}Z_{5}.   
\end{aligned}
\end{equation}
In this case the matrix $\tilde{C}$ equals
\begin{equation}
    \tilde{C} = 
    \begin{bmatrix}
        0&2&2&2\\
        2&0&2&2\\
        2&2&0&2\\
        2&2&2&0\\
    \end{bmatrix}.
\end{equation}
Note that here it is also possible to associate a graph to this stabilizer by $\Gamma = \tilde{C}/2$, but that is not a general rule for all stabilizer subspaces.

\subsection{Entanglement witnesses for subspaces}

After considering the above, we can proceed to construct GME witnesses for GME stabilizer subspaces. As in the case of graph states, let us start with projector-based witnesses, which for a given stabilizer subspace $\Vs$ can be stated as
\begin{equation}
    \Wt_{\Vs} = \frac{1}{d}\1 - P_{\Vs},
\end{equation}
where $P_{\Vs}$ is the projector onto $\Vs$. The proof that this is a witness of GME follows the same lines as the proof of Theorem \ref{Theorem Projector}, and therefore we skip it here. 

Similarly to the projector-based witness for a graph state, this witness has 
\begin{equation}
    \p = \frac{d-1}{d}\frac{1}{1-d^{-N+k}},
\end{equation}
and the number of LMSs needed to detect GME with it grows linearly with the number of qudits $N$. 
In order to lower the number of LMSs, we could construct a witness as in Theorem \ref{Theorem gens}, but let us move to the following construction.
%
%
\begin{Theorem} \label{Theorem Subspace}
Consider a GME stabilizer subspace on the Hilbert space of prime local dimension d with a corresponding stabilizer $\mathbb{S} = \langle G_1,\ldots, G_k\rangle$. The following operator

\begin{equation}
\begin{aligned}
&\Wo{\Vs} = [(K-1)d + 1]\1 - d\sum_{i=1}^{K}\prod_{j\in C_{i}}\frac{1}{d}\sum_{n=0}^{d-1}G_j^n
\end{aligned}
\end{equation}
detects GME around an arbitrary GME stabilizer subspace, where $K$ is the chromatic number of a graph associated with the subspace in a way described in the previous section, and $\{C_i\}_{i=1}^K$ is the corresponding coloring. Furthermore, this witness is the most robust against the noise among all witnesses $\mathcal{W}$ that need a minimal number of LMSs and fulfill $\W \geq \alpha\Wt$. 
\end{Theorem}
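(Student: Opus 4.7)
The plan is to mirror the proof of Theorem~\ref{Theorem G} almost step-by-step. The sole adjustment needed is that for $k<N$ the common eigenspaces of the generators $\{G_1,\ldots,G_k\}$ are $d^{N-k}$-dimensional rather than one-dimensional, so one must work with an orthonormal basis $\{\ket{\vec g, s}\}$ of $(\mathbb{C}^d)^{\otimes N}$ indexed by $\vec g\in(\mathbb{Z}_d)^k$ and a degeneracy label $s\in\{1,\ldots,d^{N-k}\}$, satisfying $G_i\ket{\vec g, s}=\omega^{g_i}\ket{\vec g, s}$. Since every operator entering $\Wo{\Vs}$ is a polynomial in the generators, it acts trivially on $s$; thus both $\Wo{\Vs}$ and the projector-based witness $\Wt_{\Vs}=(1/d)\1-P_{\Vs}$ are simultaneously diagonal in this basis, with $\frac{1}{d}\sum_n G_j^n$ contributing the eigenvalue $\delta_{g_j,0}$ and $P_{\Vs}$ contributing $\delta_{\vec g,\vec 0}$.

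With these identifications, Lemma~\ref{lemma std} reduces the existence part of the claim to an eigenvalue analysis of $\Wo{\Vs}-\alpha\Wt_{\Vs}$, whose spectrum is
\begin{equation}
\lambda_{\vec g} = c - \frac{\alpha}{d} + \alpha\,\delta_{\vec g, \vec 0} - d\sum_{i=1}^K\prod_{j\in C_i}\delta_{g_j,0},
\end{equation}
formally identical to the expression appearing in Theorem~\ref{Theorem G}. The two competing minimizers are $\vec g=\vec 0$, giving $\lambda_0 = c-\alpha/d+\alpha-dK$, and any $\vec g\neq\vec 0$ whose nonzero entries are confined to a single color class $C_{i_0}$, giving $\lambda_\star = c-\alpha/d-d(K-1)$. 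Saturating both bounds simultaneously pins down $\alpha=d$ and $c=(K-1)d+1$. Together with $\bra{\psi}\Wo{\Vs}\ket{\psi}=1-d<0$ for any $\ket{\psi}\in\Vs$, this establishes that $\Wo{\Vs}$ detects GME in the vicinity of $\Vs$.

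For the robustness statement, I would parametrize a generic candidate witness achieving the minimum $K$ LMSs and built from the color-wise projectors $\Pi_i\coloneqq\prod_{j\in C_i}(1/d)\sum_n G_j^n$ as $\W=c\1-\sum_{i=1}^K\mu_i\Pi_i$ with $\mu_i>0$, and repeat the eigenvalue minimization with these free coefficients. Imposing $\W\geqslant\alpha\Wt_{\Vs}$ yields coupled bounds on $c$ and the $\mu_i$'s whose extremal (smallest-$c$) solution is $\mu_i\equiv d$ together with $c=(K-1)d+1$. Because the family consists solely of stabilizer operators, $\p$ is independent of the specific reference state $\rho_{\Vs}$, and Eq.~\eqref{eq: threshold probability} evaluated at $\rho_{\Vs}=P_{\Vs}/d^{N-k}$ becomes a strictly decreasing function of $c$; hence the minimizer of $c$ is also the maximizer of $\p$.

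I expect the main obstacle to lie in the optimality step rather than in the existence step. The existence step essentially writes itself once the degeneracy label $s$ is correctly carried through the eigenvalue analysis. The delicate part is arguing that non-uniform coefficients $\mu_i$ cannot simultaneously preserve the Lemma~\ref{lemma std} inequality and strictly lower $c/\sum_i\mu_i$, so that the uniform choice $\mu_i\equiv d$ is genuinely extremal; this requires tracking how the two critical inequalities $\lambda_0\geqslant 0$ and $\lambda_\star\geqslant 0$ (now color-dependent through $\mu_{i_0}$) interact across the $K$ color classes.
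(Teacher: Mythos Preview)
Your existence argument is correct and essentially identical to the paper's: the paper also reduces to Theorem~\ref{Theorem G}, handling the degeneracy when $k<N$ by extending the eigenbasis label with $N-k$ arbitrary orthonormal indices $g_{k+1},\ldots,g_N$---precisely your label $s$ in different packaging. The eigenvalue formula, the two competing minimizers, and the determination $\alpha=d$, $c=(K-1)d+1$ all go through verbatim.

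On the optimality claim you are more ambitious than the paper, which fixes the color-wise coefficients at $d$ from the outset and only minimizes over the single constant $c$; its ``most robust'' statement is therefore only established within that one-parameter family. Your parametrization $\W=c\1-\sum_i\mu_i\Pi_i$ is more general, but the argument you sketch has a gap: with free $\mu_i$ the threshold probability becomes
\[
\p=\frac{\sum_i\mu_i-c}{\sum_i\mu_i\bigl(1-d^{-|C_i|}\bigr)},
\]
which is homogeneous of degree zero in $(c,\vec\mu)$, so ``smallest $c$'' is not well defined without a normalization, and $\p$ is not a monotone function of $c$ alone once the $\mu_i$ vary. To carry your program through you would need to fix a normalization (say $\sum_i\mu_i=dK$), then jointly analyze the $K$ color-dependent constraints $c-\alpha/d-\sum_{i\neq i_0}\mu_i\geqslant 0$ together with $c-\alpha/d+\alpha-\sum_i\mu_i\geqslant 0$, and show the symmetric point is extremal. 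You correctly flag this as the delicate step; the paper simply does not attempt it.
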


\begin{proof}
This proof follows the same idea as the proof of Theorem \ref{Theorem G}, and so here we only mention its major steps, while highlighting the modifications needed to adapt it for stabilizer subspaces.

First, we have to slightly adapt the definition of the basis $\mathcal{E}$ as the number of generators $k$ can now be lower than $N$, which implies that the relation
\begin{equation}
G_{j}\ket{\vec{g}_i}=\omega^{g_{j}}\ket{\vec{g}_i},
\end{equation}
does not identify states $\ket{\vec{g}_i}$ uniquely. To fix that, we simply take $\ket{g_{k+1},g_{k+2},\ldots,g_{N}}$ to be arbitrary orthonormal states.

With this, we can rewrite the witness
\begin{equation}
\begin{aligned}
&\Wo{\mathcal{V_{\mathbb{S}}}} = c\1 - d\sum_{i=1}^{K}\prod_{j\in C_{i}}\frac{1}{d}\sum_{n=0}^{d-1}G_j^n,
\end{aligned}
\end{equation}
as
\begin{equation}
\DIFdelbegin 
\DIFdelend \DIFaddbegin \begin{aligned}
 \Wo{V_{\mathbb{S}}} = c\1 - d\sum_{i=1}^{K}\sum\limits_{l=1}^{d^{N-k}} \left(\prod_{j\in C_{i}}\delta_{g_{j},0}\right)\proj{\vec{g}_l}.
\end{aligned}\DIFaddend 
\end{equation}
Then, the threshold probability equals
\begin{equation}
\p = \frac{d K-c}{dK  -d \sum_{i=1}^{K}d^{-|C_{i}|}}.
\end{equation}
Determining the minimal $c$ via Lemma \ref{lemma std} leads us to the non-negativity condition of the following eigenvalues
\begin{equation}
\lambda_{\vec{g}}=c-\frac{\alpha}{d}+\alpha\delta_{\vec{g}_l^k,\vec{0}} -d\sum_{i=1}^{K} \prod_{j\in C_{i}}\delta_{g_{j},0},
\end{equation}
where $\vec{g}_l^{k}$ is the vector consisting of the first $k$ entries of $\vec{g}_l$. From this condition, we derive
\begin{equation}
c= (K-1)d+1.
\end{equation}
It is clear that the operator $\Wo{\Vs}$ has a negative expected value on every state from $\Vs$, which ends the proof.
\end{proof}

It follows from the above that
\begin{equation}
    \p = \frac{d-1}{d}\frac{1}{K - \sum_{i=1}^K d^{-|C_{i}|}},\label{eq:p_lim_Vs}
\end{equation}
so it may seem that there is no advantage in detecting entanglement around a subspace than around a state. However, this is not necessarily the case since $\sum_{i=1}^{K}|C_{i}|=k\leqslant N$, and so there is a potential for an increase in $\p$ coming from the smaller number of generators of a stabilizer.

The simplest example in which this advantage can be witnessed is the stabilizer of the subspace of $(\mathbb{C}^{d})^{\otimes d}$ (i.e., $N=d$) generated by
\begin{equation}\label{eq: generators N=d}
G_{1} = \prod_{i=1}^{d}X_{i},\quad G_{2}=\prod_{i=1}^{d}Z_{i}.
\end{equation}
By \cite[Theorem 1]{Makuta2023fullynonpositive}, this subspace is GME. For $d=2$, this is a stabilizer of a Bell state, which could be also seen as a GHZ state for $N=2$, and so there is no advantage. However, for any $d>2$ the witness tailored to those subspaces produces a higher noise robustness than the one for GHZ, as for this subspace
\begin{equation}
    \p = \frac{1}{2}.
\end{equation}
Note that, as opposed to any witness for a graph state, the noise robustness of this witness does not scale with $N$ or $d$, and in fact, it achieves the maximal value of $\p$ possible in this witness construction. However, there is a big caveat to this claim as this witness only works for subspaces of $(\mathbb{C}^{d})^{\otimes d}$.

For the general case of $(\mathbb{C}^{d})^{\otimes N}$, this construction can be generalized to the  stabilizer generated by
\begin{equation}\label{eq:gens_opt}
\begin{aligned}
 G_{1}&=\prod_{i=1}^{N}X_{i},\\
 G_{j} &= \prod_{i=(d-1)(j-2)+1}^{(d-1)(j-1)+1}Z_{i} \quad\textrm{for }j\in\{2,\ldots,k-1\}\\
 G_{k}&=\left(\prod_{i=(d-1)(k-2)+1}^{N-1}Z_{i}\right)Z_{N}^{r},
\end{aligned}
\end{equation}
where $r=d-1 -(N-2\mod d-1) $ and $k=\lceil (N-1)/(d-1) \rceil$. Clearly, in this case, the number of required LMSs remains two, but now the number of qudits $N$ is independent of $d$. Moreover, for the case $N=d$, this stabilizer is equal to the one generated by \eqref{eq: generators N=d}. Interestingly, for $d=2$ this construction reproduces the GHZ tailored witness, implying that the advantage over graph state witnesses is achieved only for $d>2$. 

The threshold probability for the witness tailored to this stabilizer equals
\begin{equation}
\p = \frac{d-1}{2d -1-d^{-\lceil (N-1)/(d-1) \rceil+1}}.
\end{equation}
While we cannot claim that this is the highest $\p$ one can achieve with this witness construction, this is certainly a good candidate for such a maximum.

On Fig. \ref{fig:GHZvsOpt}, we provide plots comparing $\p$ of the EW for this subspace, with the EW for the GHZ state (\ref{eq:EW_GHZ}) in 4 different cases.
\begin{figure*}[t]
    \centering
    \begin{minipage}[t]{0.49\linewidth}
    \centering
        \includegraphics[width=1\linewidth]{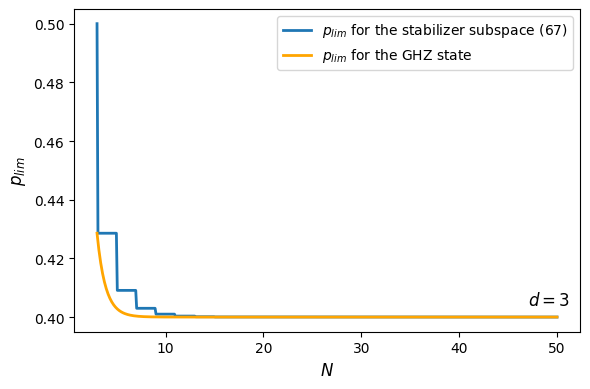}\\
        \includegraphics[width=1\linewidth]{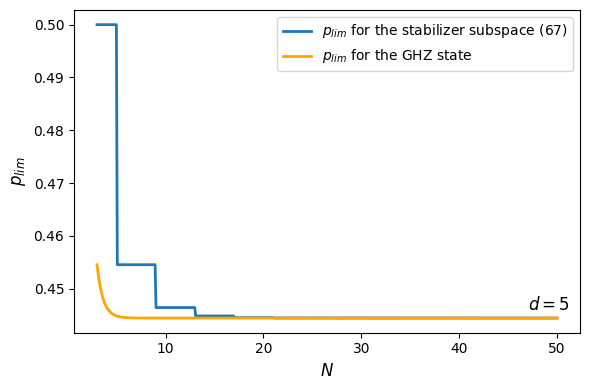}
    \end{minipage}
    \begin{minipage}[t]{0.49\linewidth}
    \centering
        \includegraphics[width=1\linewidth]{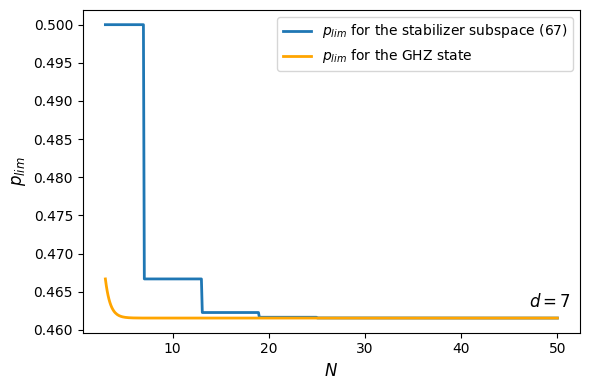}\\
        \includegraphics[width=1\linewidth]{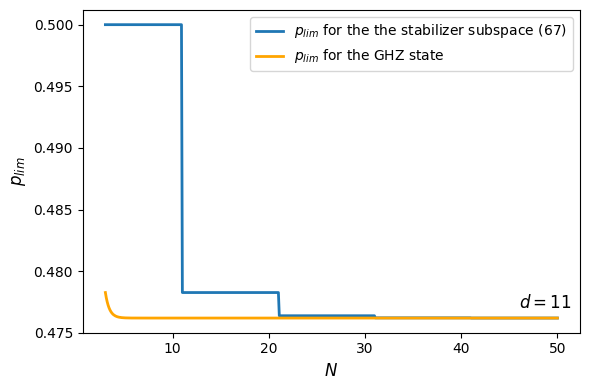}
    \end{minipage}
    \caption{\textbf{Comparison between the robustness to white noise of the GHZ and the stabilizer subspace (\ref{eq:gens_opt})}. The four plots above show how $\p$ changes with $N$ for different $d$ in the case of EWs for the GHZ state and the stabilizer subspace (\ref{eq:gens_opt}). It can be observed that for each $d$ the EW for the subspace performs better, but the results approach the same value $p_{sat}$ for large $N$. Moreover, $p_{sat}$ grows with $d$ and it will approach $p_{sat} = 1/2$ for large $d$.}
    \label{fig:GHZvsOpt}
\end{figure*}

\section{Entanglement witnesses beyond the stabilizer formalism}

In this section, we slightly deviate from the main object of our work - qudits. Here, we limit ourselves to qubits to ease the computations, but we do not see any problems in generalizing the results below to higher local dimensions. Our aim here is to further develop the approach of Ref. \cite{OtfriedStabilizer}, where it was shown that the construction of entanglement witnesses can be extended beyond the stabilizer formalism by considering non-local stabilizing operators. In the following, we define the \textit{ nonlocal stabilizer} of a subspace $V$, denoted $\mathbb{S}$, to be a group generated by $k$ generators constructed from the sum of tensor products of Weyl-Heisenberg/Pauli matrices, such that
\begin{equation}
    \forall_{\ket\psi\in V}\quad\forall_{S\in\mathbb{S}}\quad S\ket\psi=\ket\psi.
\end{equation}
It might be confusing to refer to this group as non-local \textit{stabilizer} since it is not a subgroup of the generalized Pauli group, but we decided to use this term because some of the properties of the stabilizer group relevant for our purpose carry over to the case of non-local operators.

To illustrate this method, let us consider the state $\ket{0\ldots0}$ whose stabilizer is $\mathbb{S}=\langle Z_1,\ldots, Z_N\rangle$, where $Z_i$ is the (generalized) $Z$ Pauli matrix acting on site $i$. Consider then a multipartite state $\ket{\psi}$ that does not originate from the stabilizer formalism. One of the approaches to construct the non-local stabilizer of this state is to find a unitary matrix $U$ such that
\begin{equation}
    \ket\psi = U\ket{0\ldots 0}. \label{eq:constab}
\end{equation}
Then, it follows that the non-local stabilizer of $\ket{\psi}$ is $\mathbb{S} = U\langle Z_1,\ldots ,Z_N\rangle U^{\dagger} =: \langle S_1,\ldots, S_N\rangle$. Note that the key point of this method is to allow non-local stabilizer operators, and the above is just one way to obtain them.

In the case of the $N$-qubit $\ket W$ state, we found a class of unitary matrices fulfilling (\ref{eq:constab}), with a simple decomposition into tensor products of Pauli matrices. Below, we present examples of these matrices in both the odd- and even-qubit case. In the first case, the matrix reads
\begin{equation}
\begin{aligned}
        &U^{W_{N=2j+1}} = \frac{1}{\sqrt{N}}
        (X\otimes \underbrace{Z\otimes\ldots\otimes Z}_{j}\otimes\underbrace{\1\otimes\ldots\otimes\1}_{j}+\\
        &+ \1\otimes X\otimes \underbrace{Z\otimes\ldots\otimes Z}_{j}\otimes\underbrace{\1\otimes\ldots\otimes\1}_{j-1} + \ldots +\\
        &+Z\otimes\ldots\otimes Z\otimes\1\otimes\ldots\otimes\1\otimes X).
\end{aligned}
\end{equation}
Notice that each next element in this sum can be obtained by applying the permutation
\begin{equation}
\begin{pmatrix} \label{eq:permutation}
    1&2&\ldots & N\\
    2&3&\ldots &1
\end{pmatrix}
\end{equation}
to the previous element. This, together with the fact that the first operator in the sum consists of $X$ on the first site, $(N-1)/2$ $Z$ matrices (equivalently, $N/2$ or $N/2-1$ in the case of even $N$. There are 2 choices because they both lead to $Z$ matrix appearing $N(N-1)/2$ times in the whole operator, which is one of the conditions in this class of unitary matrices and $\1$ at the rest of the sites, characterizes the class of unitaries we mentioned above.

For even $N$, the following unitary matrix fulfills 
(\ref{eq:constab}):
\begin{equation}\label{eq:U_2k}
\begin{aligned}
        &U^{W_{N=2j}} = \frac{1}{\sqrt{N}}[X\otimes \underbrace{Z\otimes\ldots\otimes Z}_{j-1}\otimes\underbrace{\1\otimes\ldots\otimes\1}_{j}(Z)+\\
        &+Z\otimes X\otimes\underbrace{ Z\otimes\ldots\otimes Z}_{j-1}\otimes\underbrace{\1\otimes\ldots\otimes\1}_{j-1}(\1)+\ldots+\\
    &+\underbrace{Z\otimes\ldots\otimes Z}_{j-2}\otimes\underbrace{\1\otimes\ldots\otimes\1}_{j}\otimes Z\otimes X(Z)].
\end{aligned}
\end{equation}
Notice that in each operator in the sum, there are $N+1$ matrices. The ($N+1$)th we put in parentheses to indicate its special role. To keep the property that the next operator in the sum can be constructed by applying the permutation (\ref{eq:permutation}) to the previous one, we have to do the following: we create a string of $N+1$ elements (so including the operator in parentheses), and, to obtain the next string, we use permutation (\ref{eq:permutation}). In this way, we construct $N$ strings, and the unitary matrix (\ref{eq:U_2k}) is constructed from the sum of the tensor product of the first $N$ operators from each string. 

In the case of an odd number of qubits $N$, we obtain the following generator of the $W$ state non-local stabilizer
\begin{eqnarray}
    &&S_{1}^{W_{N=2j+1}} = \frac{1}{N}[2YY\1\ldots\1 Z\1\ldots\1\nonumber+\\
    &&+ 2YZY\1\ldots \1 ZZZ\1\ldots\1\nonumber+\\
    &&+\ldots+2YZ\ldots ZYZ\ldots Z+2XZ\ldots ZXZ\ldots Z+\nonumber\\
   &&+2X\1Z\ldots Z\1XZ\ldots Z+\ldots +2X\1\ldots\1 Z\1\ldots\1 X +\nonumber\\ 
   &&+(N-2)Z\1\ldots\1].    
\end{eqnarray}
To obtain the next generator, we have to apply the permutation (\ref{eq:permutation}) to each term in the previous generator. 

With these operators, we can construct non-local stabilizer witnesses as in previous sections. For instance
\begin{Theorem}\label{Theorem W stab}
Consider the $N$-qubit $W$ state, where $N$ is an odd number, and its non-local stabilizer $\mathbb{S} = \langle S_1,\ldots, S_N\rangle$. The following operator
\begin{equation}
    \W^{W_N=2j+1} = \frac{N^2-1}{N}\1 - \sum_{k=1}^NS^{W_{N=2j+1}}_k
\end{equation}
detects GME around the $N$-qubit $W$ state.
\end{Theorem}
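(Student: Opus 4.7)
The plan is to apply Lemma \ref{lemma std} with the projector-based witness $\widetilde{\mathcal{W}} = \frac{1}{2}\1 - \ket{W}\bra{W}$ from Theorem \ref{Theorem Projector} (specialized to $d=2$). It suffices to exhibit $\alpha>0$ such that $\mathcal{W}^{W_N=2j+1} \geq \alpha \widetilde{\mathcal{W}}$; the GME-witness property then follows at once.

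The key simplification is that, by construction, the non-local stabilizer generators have the form $S_k = U Z_k U^\dagger$, with $U$ the unitary displayed in the text satisfying $U\ket{0\ldots 0} = \ket{W}$. Consequently the $S_k$ mutually commute and share the common eigenbasis $\{U\ket{\vec z}\}_{\vec z\in\{0,1\}^N}$, with $S_k U\ket{\vec z} = (-1)^{z_k} U\ket{\vec z}$, and both $\mathcal{W}^{W_N=2j+1}$ and $\widetilde{\mathcal{W}}$ are simultaneously diagonal in this basis. In particular $\sum_k S_k$ has eigenvalue $N - 2|\vec z|$ on $U\ket{\vec z}$ (with $|\vec z|$ denoting the Hamming weight), while $\ket{W}\bra{W}$ projects onto $U\ket{\vec 0}$.

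A direct calculation then gives the eigenvalues of $\mathcal{W}^{W_N=2j+1} - \alpha \widetilde{\mathcal{W}}$ as $-1/N + \alpha/2$ on $\ket{W}$ itself and $(N^2-1)/N - N + 2|\vec z| - \alpha/2$ on the remaining $U\ket{\vec z}$ with $|\vec z|\geq 1$. The latter family is minimized at $|\vec z|=1$, yielding $(2N-1)/N - \alpha/2$. Both non-negativity constraints $\alpha \geq 2/N$ and $\alpha \leq (4N-2)/N$ are compatible for every $N\geq 1$, so picking $\alpha = 2/N$ discharges the hypothesis of Lemma \ref{lemma std} and makes $\mathcal{W}^{W_N=2j+1}$ a GME witness. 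The fact that $\bra{W}\mathcal{W}^{W_N=2j+1}\ket{W} = -1/N < 0$ then certifies that $\ket{W}$ itself is detected.

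The only genuinely nontrivial step I foresee is the identification of the explicit Pauli-tensor sum given for $S_1^{W_{N=2j+1}}$ just above the statement with the conjugate $U Z_1 U^\dagger$: this is not evident by inspection and will require expanding the unitary and carefully collecting the resulting Pauli tensors. Once this algebraic equality is verified, covariance under the cyclic permutation (\ref{eq:permutation}) delivers the remaining generators, and the diagonal-eigenvalue analysis above proceeds unchanged.
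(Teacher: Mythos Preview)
Your overall strategy---apply Lemma \ref{lemma std} with a projector-based witness and diagonalize in the common eigenbasis $\{U\ket{\vec z}\}$---is exactly what the paper does, and your eigenvalue bookkeeping is internally correct. The gap is in the choice of $\Wt$. Theorem \ref{Theorem Projector} applies only to \emph{graph states}; the $W$ state is not a graph state (indeed, the whole point of this section is that $\ket{W}$ lies outside the stabilizer formalism), so you cannot invoke it here. Worse, $\frac{1}{2}\1-\proj{W}$ is simply \emph{not} a GME witness for $N\geq 3$: the biseparable state $\ket{0}_1\otimes\ket{W_{N-1}}_{2,\ldots,N}$ has overlap $(N-1)/N>1/2$ with $\ket{W_N}$, so $\Tr(\Wt\,\rho_{\mathrm{bisep}})$ can be negative. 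Lemma \ref{lemma std} therefore does not apply, and the argument collapses.

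The paper repairs this by using the correct projector-based witness for the $W$ state,
\[
\Wt^{W_N}=\frac{N-1}{N}\,\1-\proj{W},
\]
taken from Ref.~\cite{TothWState}. Your eigenvalue analysis then carries over after replacing $-\alpha/2$ by $-\alpha(N-1)/N$: on $\ket{W}$ one obtains $(\alpha-1)/N\geq 0$, and on the eigenvectors with $|\vec z|=1$ one obtains $(2N-1)/N-\alpha(N-1)/N\geq 0$; both hold for $\alpha=1$, which discharges Lemma \ref{lemma std}. The remark about verifying $S_1=UZ_1U^\dagger$ at the Pauli level is fair but orthogonal to this issue.
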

\begin{proof}
    The proof is the same as that of Theorem \ref{Theorem gens}. We first consider $\W^{W_{2j+1}} = c\1 - \sum_{k=1}^NS^{W_{2j+1}}_k$ and, we find such $c$ for which the inequality $\W^{W_{N=2j+1}} \geq \alpha\Wt^{W_N}$ holds for some positive $\alpha$ and
    \begin{equation}
        \Wt^{W_N} =\frac{N-1}{N}\1 - \proj{W^N}.
    \end{equation}
    The above was proven to be a projector-based EW in \cite{TothWState}. Then, using Lemma \ref{lemma std}, we finish the proof.

    To begin, the general expression for the eigenvalue of $A = \W^{W_N}-\Wt^{W_N}$ is 
    \begin{equation}
        \lambda = c - \alpha\frac{N-1}{N} - \sum_{k=1}^N(-1)^{q_k} + \alpha\prod_{k=1}^N\frac{1}{2}[1+(-1)^{q_k}].
    \end{equation}
    There are two distinct cases when $\lambda$ is the lowest. When $q_k=0$ for all $k=1,\ldots,N$, or when some $q_t=1$ and the rest of $q_k=0$ for $k=1,\ldots t-1,t+1\ldots, N$. Those cases give us
    \begin{equation}
        \lambda_1=c-N-\alpha\frac{N-1}{N} + \alpha, \quad \lambda_2 = c-N+1-\alpha\frac{N-1}{N}.
    \end{equation}
    Imposing that both $\lambda_1\geq 0$ and $\lambda_2\geq 0$, we obtain the optimal $c = (N^2-1)/N$. 
\end{proof}

Unfortunately, this EW does not give us any advantage over the projector-based witness for the $W$ state. It detects states mixed with white noise for any $p<1/N^2$, so it performs poorly even in small systems. Computing the number of LMSs required to use this witness is challenging, but for the 3- and 5-qubit examples, it proved less efficient than the projector-based witness. Since the operators are non-local, it no longer holds that the minimal number of LMSs required to detect GME is the number of LMSs required to measure all the generators of the stabilizer, e.g., by considering products of generators, we may obtain a witness that requires fewer LMSs. A perfect example is EW from \cite[Theorem 8]{OtfriedStabilizer}, where, although it has still lower $\p$ than the projector-based witness, it requires only three LMS, compared to 5 LMSs required by the projector-based EW. 

It is also possible to construct an EW for subspaces in this formalism. In the case of a subspace spanned by two states $\ket{\psi_1}$ and $\ket{\psi_2}$, we have to find a unitary matrix $U$ fulfilling $U\ket{000}=\ket{\psi_1}$ and $U\ket{001}=\ket{\psi_2}$. For example, consider a subspace $V$ spanned by the 3-qubit $\overline{W}$ state and 3-qubit $W$ state. Since we already have the non-local stabilizer of $\ket W$, we do not have to start from the beginning again. It can be checked that $V$ is stabilized by $\mathbb{S}=\langle S^{W_{3,2}}_1S^{W_{3,2}}_2,S^{W_{3,2}}_1S^{W_{3,2}}_3\rangle =: \langle S_{1}^{W,\overline{W}_{3,2}},S_2^{W,\overline{W}_{3,2}}\rangle$. With these operators, we can construct an EW similar to the one from \cite[Theorem 8]{OtfriedStabilizer}, but first, we need to obtain a projector-based witness for this subspace.

\begin{Theorem}
     Consider a subspace $V$ spanned by the 3-qubit $W$ state and the 3-qubit $\overline{W}$. The following operator
     \begin{equation}
         \Wt^{W,\overline{W}_{3,2}} = \frac{2}{3}\1 - P_V \label{eq:projb wit WWb}
     \end{equation}
     detects GME for every state in $V$ and its vicinity.
\end{Theorem}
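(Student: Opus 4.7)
The plan is to mirror the projector-based argument of Theorem \ref{Theorem Projector}, now adapted to the two-dimensional, non-stabilizer subspace $V=\mathrm{span}\{\ket{W},\ket{\overline{W}}\}$. By linearity of the trace and the definition of biseparability, the witness condition $\Tr(\Wt^{W,\overline{W}_{3,2}}\rho_{\mathrm{bisep}})\geqslant 0$ reduces to the separability bound
\begin{equation}
\max_{Q|\overline{Q}}\,\max_{\ket{\phi}\in\Phi_{Q|\overline{Q}}}\bra{\phi}P_V\ket{\phi}\leqslant \frac{2}{3}
\end{equation}
on pure product states across the three bipartitions. That the witness is nontrivial is immediate from $\bra{W}\Wt^{W,\overline{W}_{3,2}}\ket{W}=2/3-1=-1/3<0$, so the state $\ket{W}\in V$ is itself detected.

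Since $\ket{W}$ and $\ket{\overline{W}}$ are invariant under all permutations of the three qubits, so is $P_V$, and the three single-qubit versus two-qubit bipartitions yield identical maxima; it is enough to work with $A=\{1\}$, $B=\{2,3\}$. Using the standard projection identity $\bra{\phi_A\phi_B}P_V\ket{\phi_A\phi_B}=\max_{\ket{\chi}\in V,\,\|\chi\|=1}|\langle\chi|\phi_A\phi_B\rangle|^2$, the task becomes bounding the square of the largest Schmidt coefficient across $1|23$ of an arbitrary normalized $\ket{\chi}=\cos\theta\ket{W}+e^{i\varphi}\sin\theta\ket{\overline{W}}\in V$. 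Writing $\ket{\chi}=\ket{0}_1\ket{u_0}_{23}+\ket{1}_1\ket{u_1}_{23}$ with $\ket{u_0},\ket{u_1}$ read off directly from the expansions of $\ket{W}$ and $\ket{\overline{W}}$, the reduced matrix $\rho_1=\Tr_{23}\proj{\chi}$ is an explicit $2\times 2$ Hermitian matrix whose spectrum does not depend on $\varphi$ (the phase enters only as an overall phase of the off-diagonal entries), so the whole problem collapses to a one-parameter optimization of $\lambda_{\max}(\rho_1)$ over $\theta\in[0,\pi/2]$.

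The main obstacle is precisely this optimization step: verifying analytically that $\lambda_{\max}(\rho_1)\leqslant 2/3$ for every $\theta$. All remaining ingredients---permutation symmetry, the projection identity, and the algebraic reduction to a two-qubit reduced state---are routine; the rest of the theorem (application to biseparable mixed states and identification of a detected GME vector) then follows immediately by convexity and the negativity of $\bra{W}\Wt^{W,\overline{W}_{3,2}}\ket{W}$.
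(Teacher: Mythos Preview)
Your outline is structurally the same as the paper's: both use the permutation symmetry of $\ket{W}$ and $\ket{\overline{W}}$ to fix the bipartition $1|23$, identify $\max_{\ket{\phi}\in\Phi_{1|23}}\bra{\phi}P_V\ket{\phi}$ with the largest squared Schmidt coefficient taken over $\ket{\chi}\in V$, and reduce to a one-parameter optimization over $\ket{\chi}=\cos\theta\,\ket{W}+e^{i\varphi}\sin\theta\,\ket{\overline{W}}$. The only structural difference is that the paper writes the outer optimization over $V$ as a \emph{minimum} (citing \cite{Demianowicz2,geomeasure}), whereas your projection identity correctly forces it to be a \emph{maximum}.

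The step you flag as the ``main obstacle'' actually fails, and with it the statement as written. At $\theta=\pi/4$ one has $\ket{\chi}=(\ket{W}+\ket{\overline{W}})/\sqrt{2}$, and tracing out qubits $2,3$ gives
\[
\rho_1=\tfrac{1}{6}\,\1+\tfrac{2}{3}\,\proj{+},\qquad \lambda_{\max}(\rho_1)=\tfrac{5}{6}>\tfrac{2}{3}.
\]
Equivalently, the $1|23$-product vector $\ket{\phi}=\ket{+}\otimes\tfrac{1}{\sqrt{10}}\bigl(\ket{00}+2\ket{01}+2\ket{10}+\ket{11}\bigr)$ gives $\langle\phi|P_V|\phi\rangle=5/6$, so $\Tr\!\bigl(\Wt^{W,\overline{W}_{3,2}}\proj{\phi}\bigr)=2/3-5/6<0$ on a biseparable state. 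Your computation, if carried through, would therefore produce the bound $5/6$ rather than $2/3$; the paper arrives at $2/3$ only because it minimizes over $\ket{\psi}\in V$ (the minimum being attained at $\ket{W}$) where a maximum is required. The operator becomes a valid GME witness once the constant $2/3$ is replaced by $5/6$.
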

\begin{proof}
    To prove that the above operator detects GME, we have to show that
    \begin{equation}
        c=\max_{Q|\overline{Q}}\max_{\ket\phi\in\mathcal{P}}\bra{\phi}P_V\ket{\phi} = \frac{2}{3}.
    \end{equation}
    In \cite{Demianowicz2, geomeasure} it was proven that the above is equivalent to
    \begin{equation}
       c=\min_{\ket{\psi}\in V}\max_{Q|\overline{Q}}\max_{\ket\phi\in\mathcal{P}}|\scalprod{\psi}{\phi}|^2=\frac{2}{3}.
    \end{equation}
    From \cite{SchmidtTrick} it follows that the first maximum in the above is equivalent to the square of the maximum Schmidt coefficient $\lambda_i$ of $\ket\psi$. Furthermore, both the GHZ and the $W$ state are symmetric under swapping of qubits; thus, we can limit ourselves to considering just one particular bipartition. After these remarks, this proof reduces to showing that 
    \begin{equation}
        c=\min_{\substack{\ket{\psi} = a\ket{\overline{W}} + b\ket{W}, \\ |a|^2 + |b|^2 = 1}}\max_{i}\lambda_i=\frac{2}{3}.
    \end{equation}
    This problem can be solved analytically. First, we compute the Schmidt coefficients of $\ket\psi$ and take the maximal one. Then, we minimize it over $a$ and $b$ with the constraint that $|a|^2+|b|^2=1$. Indeed, after these computations, one obtains $c=2/3$.
\end{proof}
With this result, we can construct a non-local stabilizer witness for the subspace $V$.
\begin{Theorem}
    Consider the subspace $V$ spanned by the 3-qubit $W$ and $\overline{W}$ states and the corresponding non-local stabilizer $\mathbb{S} = \langle S_1, S_2\rangle$. The following operator
    \begin{equation}
        \W^{W,\overline{W}_{3,2}} = \frac{5}{3}\1 - S_1^{W,\overline{W}_{3,2}}-S_2^{W,\overline{W}_{3,2}} \label{eq:stab_wit_WWb}
    \end{equation}
    detects GME in every state in $V$ and its vicinity. 
\end{Theorem}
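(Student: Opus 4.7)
The plan is to mirror the proofs of Theorems \ref{Theorem gens} and \ref{Theorem W stab}: invoke Lemma \ref{lemma std} with the projector-based witness $\Wt^{W,\overline{W}_{3,2}}$ from (\ref{eq:projb wit WWb}) playing the role of the ``trusted'' entanglement witness, and exhibit a positive constant $\alpha$ for which the operator inequality $\W^{W,\overline{W}_{3,2}} \geq \alpha\, \Wt^{W,\overline{W}_{3,2}}$ holds.

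The structural observation that makes this tractable is that both non-local stabilizer generators $S_1^{W,\overline{W}_{3,2}}$ and $S_2^{W,\overline{W}_{3,2}}$ are conjugations of the two-qubit $Z$-type operators $Z_1 Z_2$ and $Z_1 Z_3$ by a common unitary $U$ with $U\ket{000}=\ket{W}$ and $U\ket{001}=\ket{\overline{W}}$. They are therefore commuting Hermitian involutions with spectrum $\{\pm 1\}$, and the projector onto $V$ may be written as
\begin{equation*}
P_V = \frac{1}{4}\left(\1 + S_1^{W,\overline{W}_{3,2}}\right)\left(\1 + S_2^{W,\overline{W}_{3,2}}\right).
\end{equation*}
Consequently $\W^{W,\overline{W}_{3,2}} - \alpha\, \Wt^{W,\overline{W}_{3,2}}$ is simultaneously diagonal in the common eigenbasis of $S_1^{W,\overline{W}_{3,2}}$ and $S_2^{W,\overline{W}_{3,2}}$, and its eigenvalues, labelled by $(s_1,s_2)\in\{\pm 1\}^2$, take the closed form
\begin{equation*}
\lambda(s_1,s_2) = \frac{5}{3} - s_1 - s_2 - \frac{2\alpha}{3} + \frac{\alpha}{4}(1+s_1)(1+s_2).
\end{equation*}

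I would then identify the two candidate minima: the sector $(s_1,s_2)=(+1,+1)$, in which $-s_1-s_2$ is most negative but the $P_V$ contribution reinstates $\alpha$, and the sectors in which at least one $s_i=-1$, where the $P_V$ contribution vanishes. Imposing non-negativity in both kinds of sector produces a short interval of admissible $\alpha$; any positive value in this interval (in particular $\alpha=1$) suffices to conclude, via Lemma \ref{lemma std}, that $\W^{W,\overline{W}_{3,2}}$ inherits the witness property from $\Wt^{W,\overline{W}_{3,2}}$ and therefore detects GME on $V$ and in its vicinity.

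The main obstacle, such as it is, is conceptual rather than algebraic: one must resist unpacking the cumbersome explicit tensor-product expansions of $S_1^{W,\overline{W}_{3,2}}$ and $S_2^{W,\overline{W}_{3,2}}$ and instead manipulate them abstractly, via their defining algebraic properties (commuting Hermitian involutions whose common $+1$ eigenspace is exactly $V$). Once that viewpoint is adopted, the verification collapses to a four-entry eigenvalue check, essentially identical in spirit to the arguments already used in Theorems \ref{Theorem gens}, \ref{Theorem G} and \ref{Theorem W stab}.
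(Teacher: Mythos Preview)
Your approach is exactly the paper's: invoke Lemma \ref{lemma std} with the projector-based witness $\Wt^{W,\overline{W}_{3,2}}$ and verify the operator inequality by diagonalizing in the common eigenbasis of the two commuting involutions $S_1,S_2$, reducing everything to the four-case check $\lambda(s_1,s_2)\geq 0$. One small slip worth fixing: the clause ``$U\ket{001}=\ket{\overline{W}}$'' is inconsistent with your own identification $S_i = U\,Z_1 Z_{i+1}\,U^\dagger$, since $Z_1Z_2\ket{001}=-\ket{001}$ would then give $S_1\ket{\overline{W}}=-\ket{\overline{W}}$; the correct (and only) fact you actually use is that the common $+1$ eigenspace of $S_1,S_2$ equals $V$, which under your conjugation picture corresponds to $U\ket{111}\propto\ket{\overline{W}}$ rather than $U\ket{001}$.
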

\begin{proof}
    The proof is the same as those of Theorems \ref{Theorem gens} and \ref{Theorem W stab}. We first consider an operator 
    \begin{equation}
    \W = c\1 - S_1^{W,\overline{W}_{3,2}}-S_2^{W,\overline{W}_{3,2}}.    
    \end{equation}
     One then finds that for $c=5/2$ the inequality $\W\geq\alpha\Wt$ is fulfilled for some $\alpha>0$ and the projector-based witness $\Wt$ from (\ref{eq:projb wit WWb}). It finally follows from Lemma \ref{lemma std} that the EW (\ref{eq:stab_wit_WWb}) detects GME.
\end{proof}

Actually, the two just described EWs are almost the same since in the EW (\ref{eq:stab_wit_WWb}), we measure almost all non-local stabilizer operators. We can suspect that for a larger number of qubits $N$, the projector-based EW will offer higher robustness against noise, while the carefully constructed non-local stabilizer witness will require fewer LMSs. In this case, both EWs require 3 LMSs; the projector-based EW is characterized by $\p=8/21$ and the stabilizer EW by $\p = 1/6$. 

Note that finding a unitary matrix, which would rotate the state $\ket{000}$ to the state $\ket{\psi}$ around which we want to detect entanglement, with a simple decomposition into Pauli matrices, is generally a hard problem. In most cases, we could not find the non-local stabilizer operators with as simple decompositions as in the above cases. For example, we consider a subspace spanned by the 3-qubit GHZ state and the 3-qubit $W$ state. In this case, we only found unitary matrices similar to
\begin{eqnarray}
&&U^{GHZ,W_{3,2}}\nonumber\\
&&=
\begin{bmatrix}
0&1/\sqrt{2}&0&0&0&0&0&1/\sqrt{2}\\
1/\sqrt{3}&0&1/\sqrt{2}&0&1/\sqrt{6}&0&0&0\\
1/\sqrt{3}&0&-1/\sqrt{2}&0&1/\sqrt{6}&0&0&0\\
0&0&0&1&0&0&0&0\\
1/\sqrt{3}&0&0&0&-2/\sqrt{6}&0&0&0\\
0&0&0&0&0&1&0&0\\
0&0&0&0&0&0&1&0\\
0&1/\sqrt{2}&0&0&0&0&0&-1/\sqrt{2}
\end{bmatrix},\nonumber\\
\end{eqnarray}
which has all 64 coefficients in the decomposition into Pauli matrices. The non-local stabilizing operators that we obtain by rotating $Z^{(1)}$ and $Z^{(2)}$ are
\begin{equation}
S^{GHZ,W_{3,2}}_1=\left[
\begin{array}{cccccccc}
 0 & 0 & 0 & 0 & 0 & 0 & 0 & 1 \\
 0 & 2/3 & -1/3 & 0 & 2/3 & 0 & 0 & 0 \\
 0 & -1/3 & 2/3 & 0 & 2/3 & 0 & 0 & 0 \\
 0 & 0 & 0 & 1 & 0 & 0 & 0 & 0 \\
 0 & 2/3 & 2/3 & 0 & -1/3 & 0 & 0 & 0 \\
 0 & 0 & 0 & 0 & 0 & -1 & 0 & 0 \\
 0 & 0 & 0 & 0 & 0 & 0 & -1 & 0 \\
 1 & 0 & 0 & 0 & 0 & 0 & 0 & 0 \\
\end{array}
\right]
\end{equation}
and
\begin{equation}
S^{GHZ,W_{3,2}}_2=\left[
\begin{array}{cccccccc}
 0 & 0 & 0 & 0 & 0 & 0 & 0 & 1 \\
 0 & 0 & 1 & 0 & 0 & 0 & 0 & 0 \\
 0 & 1 & 0 & 0 & 0 & 0 & 0 & 0 \\
 0 & 0 & 0 & -1 & 0 & 0 & 0 & 0 \\
 0 & 0 & 0 & 0 & 1 & 0 & 0 & 0 \\
 0 & 0 & 0 & 0 & 0 & 1 & 0 & 0 \\
 0 & 0 & 0 & 0 & 0 & 0 & -1 & 0 \\
 1 & 0 & 0 & 0 & 0 & 0 & 0 & 0 \\
\end{array}
\right].
\end{equation}
Fortunately, they have a lot fewer non-zero coefficients in the decomposition into tensor products of Pauli matrices. To construct a non-local stabilizer EW for this subspace, we first have to obtain a projector-based witness. Following the same procedure as in the case of the subspace spanned by the $W$ state and the $\overline{W}$ state, we can prove that the operators
\begin{equation}
    \Wt^{GHZ,W_{3,2}} = \frac{1}{2}\1 -P_V
\end{equation}
and
\begin{equation}
    \W^{GHZ,W_{3,2}} = \frac{2}{3}\1 - S^{GHZ,W_{3,2}}_1-S^{GHZ,W_{3,2}}_2
\end{equation}
detect GME in every state from the subspace $V$ and its vicinity. Both of these operators have complex decomposition into tensor products of Pauli matrices and require a lot of LMS to be implemented. 

Our examples show that the problem with stabilizer EWs with non-local stabilizer operators is not that they are hard to construct. Indeed, we showed many examples that were not hard to obtain, and most of them are easy to generalize further. The problem is that the stabilizer witnesses were a promising alternative to projector-based EWs because they required fewer LMSs to implement, with worse, but still acceptable, noise robustness. In general, allowing for non-local stabilizer operators makes it harder to obtain an easy-to-implement EW. Yet, examples such as 3-qubit $W$ state suggest that there still might be a place for them, but it would require a method to find the unitary matrices with the simplest decomposition into tensor products of Pauli/Weyl-Heisenberg matrices. Once the optimal non-local stabilizer operators are found, a method to find an EW with the fewest LMS would be needed, since, in the case of non-local stabilizer operators, it no longer holds that the lowest number of LMSs required to detect GME is the number of LMSs required to measure all generators of the stabilizer.

\section{Conclusions}

In summary, we have generalized the genuinely multipartite entanglement witnesses with high noise robustness introduced in Ref. \cite{OtfriedStabilizer} for multiqubit graph states in two directions: (i) to systems with arbitrary prime local dimension $d$, and (ii) to stabilizer subspaces of dimension greater than one.
In the first case, we have constructed GME entanglement witnesses tailored to graph states of higher local dimension than two. We have also shown that the best robustness to white noise is achieved by considering a witness tailored to the GHZ state. 

For the latter, we showed how the construction can be extended to the cases in which the stabilizer does not specify a state, but rather a whole subspace of states. We have also shown that considering witnesses tailored to the stabilizer subspaces offers a clear advantage in noise robustness as compared to their graph state counterparts. In fact, for the case where the number of qudits equals the local dimension of a Hilbert space, we have formulated the most noise robust witness that can be created via this construction. For more general case of arbitrary local dimension and number of qudits, we have proposed a witness that offers a noise robustness higher than the one achieved by the graph state witness for any local dimension higher than two.

This work opens several avenues for further research. The most immediate is to identify, within this framework, entanglement witnesses that exhibit optimal noise robustness. While one candidate would be the stabilizer subspace and its corresponding witness which we proposed here, we have not provided a proof of optimality. Another promising line of research, explored already in Ref. \cite{OtfriedStabilizer} for some particular states, involves extending the method of constructing witnesses from tensor products of Pauli/Weyl-Heisenberg matrices to entangled quantum states beyond the stabilizer formalism, by considering sums of operators from the generalized Pauli group.

\section{Acknowledgments}

This project is supported by the National Science Centre (Poland) through
the SONATA BIS project No. 2019/34/E/ST2/00369 and has received funding from the European Union's Horizon Europe research and innovation programme under grant
agreement No 101080086 NeQST.

%

\end{document}